\documentclass[10pt,conference]{IEEEtran}
\IEEEoverridecommandlockouts

\usepackage{epsfig,rotating,setspace,latexsym,amsmath,epsf,amssymb,amsfonts,bm,cite,algorithm,graphicx,epsf,authblk,epstopdf,color,algpseudocode,bbm}
\usepackage{amsthm}

\usepackage{algorithm} 
\usepackage{algpseudocode} 
\usepackage{booktabs}
\usepackage{subfigure}
\usepackage{mathtools,blkarray}
\usepackage{caption}
\usepackage{nicematrix}
\usepackage{multirow} % Required for multirows
\usepackage{lscape} 
\usepackage{subcaption}

\newcounter{procedure}%        counter for protocols
\newcounter{algorithm saved}% the real counter for algorithms

\newtheorem{remark}{Remark}

\newtheorem{corollary}{Corollary}

\newtheorem{theorem}{Theorem}
\newtheorem{lemma}{Lemma}

\usepackage{bm} % Best package for bold math

\makeatletter


\allowdisplaybreaks

\normalsize

\ifCLASSINFOpdf
 
\else
 
\fi

\begin{document}

\title{Eigenspace-Based Clustering for Personalized System Identification}
\author{Abdulmoneam Ali$~~~~~~\qquad$Dipankar Maity$~~~~~~\qquad$Ahmed Arafa\\Department of Electrical and Computer Engineering\\ University of North Carolina at Charlotte, NC 28223\\
$\quad$\emph{aali28@charlotte.edu}$\qquad$\emph{dmaity@charlotte.edu}$\qquad$\emph{aarafa@charlotte.edu}
\thanks{This work of A. Ali and A. Arafa was supported by the U.S. National Science Foundation under Grant ECCS 21-46099. The work of D. Maity was supported by the U.S. National Science Foundation CAREER award under Grant 2443349.}}

\markboth{IEEE Transactions on Communications}%
{Submitted paper}

\maketitle

\begin{abstract}
We study the problem of system identification in heterogeneous settings, where different systems may follow distinct underlying dynamics. Existing clustered system identification approaches often rely on iterative training-based cluster assignment, which can be sensitive to learning uncertainty and model initialization. In contrast, we propose a \emph{one-shot, training-free} clustering method that identifies similar systems using the structure of their locally observed data. Specifically, each system estimates a local state covariance matrix, and cluster identities are inferred by measuring the alignment between the leading covariance eigenspaces of different systems. We provide a mathematical interpretation of the proposed similarity score and develop a finite-sample analysis that characterizes how covariance estimation error induces eigenspace perturbations in terms of the underlying system dynamics. We then derive a probability bound for pairwise false merges and a global clustering success guarantee. Numerical experiments demonstrate that the proposed eigenspace-based clustering method effectively identifies systems with shared dynamics, leading to lower personalized model-estimation error compared with training-based clustering and non-clustered baselines.  
\end{abstract}

%keywords
% \IEEEpeerreviewmaketitle

%===
\section{Introduction}
Dynamical systems whose evolution is governed by unknown parameters serve as a central abstraction in control theory, robotics, and learning-based decision making. The task of estimating these latent parameters from empirical observations---commonly referred to as system identification---is a fundamental prerequisite for effective model-based control, time-series forecasting, and reinforcement learning. 

In many practical settings, the data required for robust identification are naturally distributed across multiple distinct systems, which raises significant privacy concerns about sharing raw trajectories with a centralized entity. To address this bottleneck, federated learning (FL) has emerged as a privacy-preserving framework that enables different systems to learn collaboratively while keeping the data decentralized~\cite{pmlr-v54-mcmahan17a}.

Despite these advantages, the practical deployment of FL faces several challenges, including communication bottlenecks, discrepancies among systems in communication and computation capabilities, and data heterogeneity. Data heterogeneity means that the underlying statistical distributions of local data vary significantly across participating systems. This non-i.i.d. (independent and identically distributed) nature makes it difficult to learn a single model that fits all local datasets. To address this challenge, personalized federated learning (PFL) has been proposed as a promising approach that shifts the paradigm from learning a single global model to learning models tailored to individual systems.

A rich body of literature has explored various strategies for enabling different systems to contribute to the learning of personalized models. These approaches range from meta-learning and aggregation-based strategies to clustering similar systems. A central challenge in clustering-based personalization is identifying which systems are truly similar. Most existing approaches address this problem by analyzing the training trajectories of each system with respect to different personalized models. In particular, systems are grouped according to the personalized model that minimizes their local loss function. 

\textbf{Related Work}. 
The work in~\cite{pmlr-FedSysID} demonstrates that  collaboration enables systems to achieve improved finite-time convergence guarantees, provided their dynamics are similar. This similarity is characterized by the maximum distance between the corresponding state-transition and control-input matrices.  Conversely, for divergent systems, collaboration can be detrimental to the learning process. 

To mitigate this system heterogeneity issue, the authors in~\cite{toso2023learning} propose an iterative model-based clustering technique based on an alternating approach. Specifically, each system first identifies the model cluster that minimizes its local loss, then performs a local gradient update and shares both the update and the estimated cluster identity with a central server. The server aggregates all local gradients belonging to the same cluster, repeating this process until both the cluster assignments and the models converge. However, a significant drawback of this iterative process is its strict requirement for good initialization (i.e., initial cluster models are assumed to be within a close neighborhood of the true dynamics), which is an optimistic assumption that may not hold in practice. Rather than prescribing a fixed number of clusters, the authors in~\cite{kecceci2024novel} propose an incremental clustering methodology. To mitigate the training instability induced by fluctuating cluster assignments across training rounds, a regularization term is incorporated into the loss function. Nevertheless, a common limitation shared by these existing works is their reliance on iterative, training-dependent architectures that remain inherently sensitive to the initialization.

To overcome these challenges, we propose a novel eigenspace-based PFL framework for system identification. In contrast to training-based approaches, the proposed method infers cluster identities by extracting structural information from locally observed trajectories through the leading eigenspaces of the empirical state covariance matrices. This enables the clustering phase to be performed once, prior to collaborative model training, thereby decoupling cluster identity estimation from parameter estimation. Our main contributions are threefold, summarized as follows:

\begin{itemize}
    \item \textbf{One-shot, training-free clustering methodology:} Unlike iterative alternating-minimization techniques that require model training during cluster assignment and can be sensitive to model initialization, we propose a one-shot, training-free clustering approach executed before collaborative training. Each system computes the leading eigenspace of its local state covariance matrix, and pairwise similarities are formed by measuring eigenspace alignments across systems. This avoids dependence on model initialization and reduces the communication required for cluster identity estimation.

    \item \textbf{Theoretical performance guarantees:} We provide a mathematical interpretation of the proposed eigenspace-based similarity score and develop a finite-sample analysis of the clustering procedure. Specifically, we first bound the covariance estimation error and characterize the resulting eigenspace perturbation. We then use these results to derive a pairwise false-merge probability bound and a global clustering success guarantee. The resulting bounds reveal how the number of trajectories, eigengaps, inter-cluster separation, and system-dependent covariance structure affect clustering reliability.   
 
    \item \textbf{Empirical validation:} We validate the proposed method on heterogeneous LTI systems and compare it with an iterative training-based clustering baseline and a non-clustered global model. The results show that the proposed one-shot eigenspace-based clustering method identifies systems with shared dynamics, improves personalized model-estimation accuracy, and achieves decreasing clustering misclassification rates as the number of trajectories increases, consistent with the finite-sample theory. 
\end{itemize}

\noindent \textbf{Notation and Organization.} We use lowercase letters for scalars, bold lowercase symbols for vectors, and bold uppercase symbols for matrices throughout this paper. For a matrix $\mathbf{A}$, $\mathbf{A}[i,j]$ denotes the element in row $i$ and column $j$, and $\mathbf{A}[i]$ denotes its $i$th row. We represent the set $\{1,2,\dots,M\}$ by $[M]$. 
For matrices, $\|\mathbf{A}\|_2$ and $\|\mathbf{A}\|_F$ denote the spectral and Frobenius norms, respectively, while for a vector $\mathbf{a}$, $\|\mathbf{a}\|_2$ denotes its Euclidean norm. An identity matrix of appropriate dimension is denoted by $\mathbf{I}$. The remainder of the paper is organized as follows. Section~\ref{sec:sys_model} presents the system model. Section~\ref{sec:data_sim_clustering} introduces the proposed eigenspace-based clustering algorithm, while Section~\ref{sec:theor_guarantee} presents its theoretical guarantees. Section~\ref{sec:experiments} presents the experimental setup and results. Section~\ref{sec:conclusion} concludes the paper.

%===
\section{System Model}\label{sec:sys_model}

We consider a central server and a set of $M$ linear time-invariant (LTI) systems characterized by the following state-space dynamics: 
\begin{align} \label{eq:state-space}
    \mathbf{x}_{t+1}^{(i)} = \mathbf{A}^{(i)} \mathbf{x}_{t}^{(i)} + \mathbf{B}^{(i)} \mathbf{u}_t^{(i)} + \mathbf{w}_{t}^{(i)}, \, t = 0,1, \dots, T-1,
\end{align}
where $\mathbf{x}_t^{(i)} \in \mathbb{R}^{n_{x}}$ denotes the state, $\mathbf{u}_t^{(i)} \in \mathbb{R}^{n_{u}}$ denotes the control input, and $\mathbf{w}_t^{(i)} \in \mathbb{R}^{n_{x}}$ denotes the process noise for each system $i \in [M]$ at time $t$. The matrices $\mathbf{A}^{(i)} \in \mathbb{R}^{n_x \times n_x}$ and $\mathbf{B}^{(i)} \in \mathbb{R}^{n_x \times n_u}$ represent the state-transition and control-input dynamics, respectively. Following~\cite{toso2023learning, kecceci2024novel, kecceci2025redefining}, the inputs $\{\mathbf{u}_t^{(i)}\}_{t=0}^{T-1}$ and noise terms $\{\mathbf{w}_t^{(i)}\}_{t=0}^{T-1}$ are assumed to be independent and identically distributed (i.i.d.) Gaussian random vectors, specifically $\mathbf{u}_t^{(i)}\sim \mathcal{N}(0, \sigma^2_{u,i} \mathbf{I}_{n_{u}})$ and $\mathbf{w}_t^{(i)}\sim \mathcal{N}(0, \sigma^2_{w,i} \mathbf{I}_{n_{x}})$. The initial states $\mathbf{x}_{0}^{(i)}$ also follow a Gaussian distribution $\mathcal{N}(0, \sigma^2_{x,i} \mathbf{I}_{n_{x}})$.

We assume that we have $M$ trajectory datasets, each generated by one of $K$ underlying linear systems, where $K \ll M$. These $K$ distinct linear systems constitute clusters, denoted by $\mathcal{C}_1,\dots,\mathcal{C}_K$. For any system $i \in \mathcal{C}_k$, the dynamics are governed by the matrices $(\mathbf{A}_k, \mathbf{B}_k)$, i.e., $\mathbf{A}^{(i)} = \mathbf{A}_k$ and $\mathbf{B}^{(i)} = \mathbf{B}_k$. Due to stochastic noise, the realized trajectories for each system, even within the same cluster, are different.  

For each system $i \in \mathcal{C}_k$, we define a rollout as a single trajectory of state-input pairs $\{\mathbf{x}_t^{(i)}, \mathbf{u}_t^{(i)} \}_{t=0}^{T-1}$. We assume that $N_i$ such rollouts, each of length $T$, are collected and denoted by $\{\mathbf{x}_{l,t}^{(i)}, \mathbf{u}_{l,t}^{(i)} \}_{t=0}^{T-1}$ for $l=1,\dots,N_i$. Here, $l$ and $t$ index the rollout and time step, respectively. 
Let $\mathbf{z}_{l,t}^{(i)}
:= \big[\mathbf{x}_{l,t}^{(i)\top},\mathbf{u}_{l,t}^{(i)\top}\big]^{\top}
\in\mathbb{R}^{d}$, where $d:=n_x+n_u$.
The dynamics of any system $i \in \mathcal{C}_k$ within cluster $k \in [K]$ can be compactly expressed as
\begin{align}
    \mathbf{x}_{l,t+1}^{(i)} = \mathbf{\Theta}_{k}\, \mathbf{z}_{l,t}^{(i)} + \mathbf{w}_{l,t}^{(i)}, 
    \quad \forall 1 \leq l \leq N_i,\; 0 \leq t \leq T-1,
\end{align}
where $\mathbf{\Theta}_k = [\mathbf{A}_k \; \mathbf{B}_k] \in \mathbb{R}^{n_x \times d}$ 
is the concatenated ground-truth system matrix for cluster $k$.

To facilitate batch processing, we organize the data from all $N_i$ rollouts of each system $i\in [M]$ into comprehensive matrices. In particular, for a single rollout $l$, the data are concatenated according to $\mathbf{X}_{l}^{(i)} = [\mathbf{x}_{l, T}^{(i)} \dots \mathbf{x}_{l, 1}^{(i)}] \in \mathbb{R}^{n_x \times T}$, $\mathbf{Z}_{l}^{(i)} = [\mathbf{z}_{l, T-1}^{(i)} \dots \mathbf{z}_{l, 0}^{(i)}] \in \mathbb{R}^{d \times T}$, and $\mathbf{W}_{l}^{(i)} = [\mathbf{w}_{l, T-1}^{(i)} \dots \mathbf{w}_{l, 0}^{(i)}] \in \mathbb{R}^{n_x \times T}$.\footnote{We adopt a reverse-time stacking convention, consistent with \cite{toso2023learning}, 
where state vectors are ordered from $t=T$ to $t=1$ and inputs from $t=T-1$ to $t=0$. 
This choice is purely notational and does not affect the system dynamics or estimation results.
} These matrices are then concatenated into a single batch representation $\mathbf{X}^{(i)} = [\mathbf{X}_{1}^{(i)} \dots \mathbf{X}_{N_i}^{(i)}] \in \mathbb{R}^{n_x \times N_iT}$, $\mathbf{Z}^{(i)} = [\mathbf{Z}_{1}^{(i)} \dots \mathbf{Z}_{N_i}^{(i)}] \in \mathbb{R}^{d \times N_iT}$, and $\mathbf{W}^{(i)} = [\mathbf{W}_{1}^{(i)} \dots \mathbf{W}_{N_i}^{(i)}] \in \mathbb{R}^{n_x \times N_iT}$. Thus, for each system $i \in \mathcal{C}_k, \, \forall k \in [K]$, the system dynamics can be written as
\begin{align}
    \mathbf{X}^{(i)} = \mathbf{\Theta}_k\, \mathbf{Z}^{(i)} + \mathbf{W}^{(i)}.
\end{align}

Given this batch representation, each system $i \in [M]$ has access only to its local data $(\mathbf{X}^{(i)}, \mathbf{Z}^{(i)})$ and locally minimizes the least-squares identification objective function \begin{equation}
    J^{(i)}(\mathbf{\Theta}) = \left\|\mathbf{X}^{(i)}-\mathbf{\Theta}\, \mathbf{Z}^{(i)}\right\|_F^2.
\end{equation}

However, while systems belonging to the same cluster share a common ground-truth model $\mathbf{\Theta}_k$, each system remains unaware of its cluster identity. This creates a coupled learning problem: collaborative parameter estimation requires grouping systems with shared dynamics, yet training-based clustering methods typically rely on intermediate model estimates to determine such groups. To break this coupling, we propose a clustering mechanism that identifies systems with shared dynamics before collaborative training, so that parameter estimation can then be performed within each estimated cluster.

Next, we discuss our proposed clustering algorithm.

%===
\section{Eigenspace-Based System Identification Clustering} \label{sec:data_sim_clustering}

To enable systems with the same model dynamics to learn cooperatively, such systems should be assigned the same cluster identity. Systems assigned to cluster $k$ jointly learn the model parameter $\mathbf{\Theta}_k$. Specifically, during global communication round $h$, each system $i$ assigned to cluster $k$ computes the local matrix-valued gradient of its least-squares objective, $\nabla J_i^{(h)}:=\nabla_{\mathbf{\Theta}} J^{(i)}\!\left(\widehat{\mathbf{\Theta}}_k^{(h)}\right)$, as given in Step~\ref{eq:GD} of Algorithm~\ref{alg:main_algo}, and sends the local gradient to the server. The server aggregates the received gradients from all systems assigned to cluster $k$ and updates the cluster-level estimate $\widehat{\mathbf{\Theta}}_k^{(h+1)}$, which is then broadcast back to the systems in that cluster. The main challenge in applying this PFL procedure is that the cluster identities---that is, which systems share the same model dynamics---are unknown. 

To efficiently estimate the cluster identities of the systems, we build on a data-similarity methodology originally proposed in~\cite{data_valuation} and later extended to hierarchical federated multi-task learning~\cite{ds_asilomar24} and noisy-label detection and correction~\cite{RCC_icc25, ali2026fbnll}.
The central idea is to compare datasets through the structure of their leading covariance eigenspaces, which capture how data variation is distributed across principal directions. 
In this work, we adapt this idea to personalized system identification by constructing an eigenspace-based similarity score from local state trajectories. Unlike prior applications, we further develop a system-identification-specific theoretical framework that interprets the similarity score, characterizes finite-sample eigenspace perturbations induced by covariance estimation, and provides clustering success guarantees in terms of the underlying system dynamics. 

For mathematical tractability, we form an i.i.d. state matrix using the final state vectors sampled across independent rollouts. In particular, each system $i$ constructs 
\[
\mathbf{X}_{T}^{(i)} := 
\big[\mathbf{x}_{1,T}^{(i)} \;\; \mathbf{x}_{2,T}^{(i)} \;\; \cdots \;\; \mathbf{x}_{N_i,T}^{(i)}\big]
\in \mathbb{R}^{n_x \times N_i}.\]

Since the time index is fixed throughout the clustering stage, we denote $\mathbf{X}_c^{(i)}:= \mathbf{X}_{T}^{(i)}$ for notational simplicity.
Note that $\mathbf{X}_c^{(i)}$ is used solely for clustering and does not affect the system identification updates, which rely on the full dataset $\mathbf{X}^{(i)}$.

The proposed cluster identity estimation procedure consists of three main steps. First, each system $i$ computes the eigendecomposition of its local sample covariance matrix,
\[
\widehat{\mathbf{\Sigma}}^{(i)} = \frac{1}{N_i} \mathbf{X}_c^{(i)}\, \mathbf{X}_c^{(i),\top},
\]
as
 \begin{align} \label{eq:est_eigendecomp}
     \widehat{\mathbf{\Sigma}}^{(i)} = \widehat{\bm{V}}^{(i)} \widehat{\mathbf{\Lambda}}^{(i)} \widehat{\bm{V}}^{(i), \top},
 \end{align}
where $\widehat{\mathbf{\Lambda}}^{(i)}$ contains the eigenvalues in decreasing order, and $\widehat{\bm{V}}^{(i)}$ contains the corresponding eigenvectors.

In the second step, systems exchange eigenvectors through the server to measure the alignment of their covariance eigenspaces with directions learned by other systems.\footnote{This exchange avoids sharing raw trajectories or sample covariance matrices.} 
This is carried out by applying the local sample covariance matrix to eigenvector directions received from other systems and evaluating the Euclidean norm:
\begin{align}\label{eq:v_image}
    a_{q}^{(i,\, j)}=\left\| \widehat{\mathbf{\Sigma}}^{(i)}\widehat{\mathbf{v}}_{q}^{(j)}\right\|_2, \quad \forall q \in [r]. 
\end{align}

Here, $\widehat{\mathbf v}_q^{(j)}$ is the $q$th leading eigenvector
associated with the $q$th largest eigenvalue of system $j$,
and $r \leq n_x$ denotes the number of leading eigenvectors utilized.
Based on the local eigenvalues and the directional covariance response, each system quantifies the normalized directional similarity 
\begin{align}
   \rho_{q}^{(i,\, j)}=\frac{\min\{\widehat{\lambda}_q^{(i)},a_q^{(i,\,j)}\}} {\max\{\widehat{\lambda}_q^{(i)},a_q^{(i,\,j)}\}}, \quad \forall q \in [r], \label{eq:lambd(i,j)}
\end{align}
where $\widehat{\lambda}^{(i)}_q$ is the $q$th largest eigenvalue of $\widehat{\mathbf{\Sigma}}^{(i)}$. The $\min/\max$ normalization ensures that $\rho_{q}^{(i,\, j)} \in [0,1]$. In particular, if $\widehat{\mathbf{v}}_{q}^{(j)} = \pm  \widehat{\mathbf{v}}_{q}^{(i)}$, then $a_{q}^{(i,\, j)} = \widehat{\lambda}_{q}^{(i)}$, which implies that $\rho_q^{(i,j)} =1$. Otherwise, $a_{q}^{(i,\, j)}$ reflects how strongly the received eigenvector aligns with the local covariance structure.

Given the $r$ values $\{\rho_{q}^{(i,j)}\}_{q=1}^r$, system $i$ computes a scalar similarity score, $s_{ij}$, using the geometric mean:
\begin{align}
    s_{ij}=\left(\prod_{q=1}^{r} \rho_{q}^{(i,\,j)}\right)^{\frac{1}{r}}.\label{eq:s(i,j)}
\end{align}

Thus, system $i$ assigns a high similarity score to system $j$ only when the received leading eigenvectors exhibit consistent alignment with its local covariance structure across the $r$ dominant directions. 

Finally, systems share similarity scores obtained from \eqref{eq:s(i,j)} with the server, which constructs a symmetric pairwise similarity matrix as follows:
\begin{align}\label{eq:avg_s}
   \mathbf{S}[i,j]=\frac{s_{ij} + s_{ji}}{2}, \quad \forall i, j \in [M].
\end{align}
By construction, $\mathbf{S}\in\mathbb{R}^{M\times M}$ is symmetric. The server then runs the spectral clustering algorithm~\cite{pml1Book} using $\mathbf{S}$ as the affinity matrix to infer the cluster assignment of each system. We encode the cluster assignments in a matrix $\mathbf{CI} \in \{0,1\}^{M\times K}$, where $\mathbf{CI}[i,k]=1$ indicates that system $i$ is assigned to cluster $k$. Each system is assigned to exactly one cluster.
The detailed steps are shown in Algorithm~\ref{alg:main_algo}.

To illustrate a realization of $\mathbf{S}$, Table~\ref{tab:example_of_S} shows an example with five systems and three clusters. Systems 1, 2, and 3 share the same underlying dynamics $\boldsymbol{\Theta}$, while systems 4 and 5 each belong to a distinct cluster. The resulting matrix exhibits larger similarity values among systems 1--3 and smaller similarities between systems from different clusters.

\begin{table}[h]
\centering
\caption{\small Illustrative similarity matrix $\mathbf{S}$.}
\label{tab:example_of_S}
\begin{tabular}{lccccc}
\toprule
System & Sys.~1 & Sys.~2 & Sys.~3 & Sys.~4 & Sys.~5 \\
\midrule
Sys.~1 & 1.00 & 0.94 & 0.93 & 0.50 & 0.64\\
Sys.~2 & 0.94 & 1.00 & 0.94 & 0.52 & 0.58\\
Sys.~3 & 0.93 & 0.94 & 1.00 & 0.55 & 0.62\\
Sys.~4 & 0.50 & 0.52 & 0.55 & 1.00 & 0.71\\
Sys.~5 & 0.64 & 0.58 & 0.62 & 0.71 & 1.00 \\
\bottomrule
\end{tabular}
\end{table}
% % Based on X

%================================
\begin{algorithm}[t]
    \caption{\small Eigenspace-Based Clustered System Identification}\label{alg:main_algo}
\begin{algorithmic}[1]
\small 
\State \textbf{Input}: number of systems $M$, number of clusters $K$, number of global iterations $R$, step size $\beta$, number of leading eigenvectors~$r$
\State \textbf{Output}: cluster assignment matrix $\mathbf{CI}$ and cluster-level model estimates $\{\widehat{\mathbf{\Theta}}_k^{(R)}\}_{k=1}^K$
\Statex \underline{\textit{(Cluster Identity Estimation)}}
\State Initialize the similarity matrix $\mathbf{S} \in \mathbb{R}^{M \times M}$ with zeros

\For{$i \in [M]$} 
    \State $\widehat{\mathbf{\Sigma}}^{(i)} \gets \frac{1}{N_i}\mathbf{X}_c^{(i)}\mathbf{X}_c^{(i)\top}$
\State Compute eigendecomposition $\widehat{\mathbf{\Sigma}}^{(i)} = \widehat{\mathbf{V}}^{(i)} \widehat{\mathbf{\Lambda}}^{(i)}\widehat{\mathbf V}^{(i)\top}$ 
\State Upload top-$r$ eigenvectors $\widehat{\mathbf{V}}_{0}^{(i)} \in \mathbb{R}^{n_x \times r}$ to the server
\EndFor
\State The server broadcasts $\{\widehat{\mathbf{V}}_{0}^{(j)}\}_{j=1}^M$ to all systems

\For {$i \in [M]$}
      \For {$ j  \in [M],\,  j\neq i$} 
      \State Compute $s_{ij}$ using \eqref{eq:v_image}--\eqref{eq:s(i,j)}; send $s_{ij}$ to the server
\State Server stores $\mathbf{S}[i,j] \gets s_{ij}$
    \EndFor
\EndFor
\State Symmetrize $\mathbf{S}$ using \eqref{eq:avg_s}; run spectral clustering to obtain $\mathbf{CI} \in \{0,1\}^{M \times K}$
\Statex \underline{\textit{Clustered System Identification}}
\State Server broadcasts $\widehat{\mathbf{\Theta}}_k^{(0)}$ for all $k \in [K]$
\For {$h = 0, \dots, R-1$}
    \For{ $i \in [M]$}
        \State Find assigned cluster $k$ such that $\mathbf{CI}[i,k] = 1$
        \Statex \hspace{0.45in}Compute local gradient matrix $\nabla J_{i}^{(h)}$:
        \State $\nabla J_{i}^{(h)} = - 2\left(\mathbf{X}^{(i)} - \widehat{\mathbf{\Theta}}_{k}^{(h)} \mathbf{Z}^{(i)}\right) \, \mathbf{Z}^{(i),{\top}}$ \label{eq:GD}
        \State System $i$ sends $\nabla J_{i}^{(h)}$ to the server
    \EndFor
    
    \Statex \underline{\textit{(Server Aggregation)}} 
    \For{each cluster $k \in [K]$}
        \State $|\widehat{\mathcal{C}}_k| = \sum_{i=1}^M \mathbf{CI}[i,k]$ 
        \State $\widehat{\mathbf{\Theta}}_k^{(h+1)} = \widehat{\mathbf{\Theta}}_k^{(h)} - \frac{\beta}{|\widehat{\mathcal{C}}_k|} \sum_{i=1}^M \mathbf{CI}[i,k] \nabla J_{i}^{(h)}$
    \EndFor
\EndFor
\end{algorithmic}
\end{algorithm}

%=====================================
\section{Theoretical Analysis} \label{sec:theor_guarantee}

\subsection{Interpretation of the similarity score}
We denote the population covariance matrix of the final state of system $i$ by $\mathbf{\Sigma}^{(i)} = \mathbb{E}[\mathbf{x}_T^{(i)} \mathbf{x}_T^{(i), \top}] \in \mathbb{R}^{n_x \times n_x}$.
Since $\mathbf{\Sigma}^{(i)}$ is symmetric and positive semidefinite, it admits the eigendecomposition
\begin{equation}\label{eq:true_eigendecomp}
 \mathbf{\Sigma}^{(i)} = \mathbf{V}^{(i)} \mathbf{\Lambda}^{(i)} \mathbf{V}^{(i)\top}. 
\end{equation}

We now characterize the population-level interpretation of a similarity score equal to one. 
 \begin{lemma} \label{lemma:sim_eigens} 
 Let $\mathbf{\Sigma}^{(i)}, \mathbf{\Sigma}^{(j)} \in \mathbb{R}^{n_x \times n_x}$, with $i \neq j$, be symmetric positive semidefinite 
 matrices with distinct eigenvalues $\lambda_1^{(i)} > \lambda_2^{(i)} > \dots >\lambda_{n_x}^{(i)}$ and $\lambda_1^{(j)} > \lambda_2^{(j)} > \dots >\lambda_{n_x}^{(j)}$. Let $\mathbf{V}^{(i)}=[\mathbf{v}_1^{(i)} \, \mathbf{v}_2^{(i)} \dots \mathbf{v}_{n_x}^{(i)} ]$ and $\mathbf{V}^{(j)}=[\mathbf{v}_1^{(j)} \, \mathbf{v}_2^{(j)} \dots \mathbf{v}_{n_x}^{(j)} ]$ be the corresponding orthonormal eigenvector matrices ordered according to decreasing eigenvalues. 
 Fix $r \le n_x$ and assume $\lambda_r^{(i)}, \lambda_r^{(j)} >0$. Define $\mathbf{S}[i,j]$ according to \eqref{eq:avg_s},
with $s_{ij}$ computed using the top-$r$ eigenvectors of system $j$.
If $\mathbf{S}[i,j] =1 $, then for all $q \in [r]$, $\mathbf{v}_q^{(j)} = \pm \mathbf{v}_q^{(i)}$. 
\end{lemma}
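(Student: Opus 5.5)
The plan is to reduce $\mathbf{S}[i,j]=1$ to a set of norm equalities, then pin down the eigenvectors one at a time by induction on $q$. The inequality $\|\mathbf{\Sigma}\mathbf{v}\|_2\le\lambda_1$ is tight only along the top eigenvector, and repeating this argument on successive orthogonal complements gives the claim. Throughout, the quantities in \eqref{eq:v_image}--\eqref{eq:s(i,j)} are read at the population level: $\lambda_q^{(i)}$, $\mathbf{v}_q^{(j)}$, and $a_q^{(i,j)}=\|\mathbf{\Sigma}^{(i)}\mathbf{v}_q^{(j)}\|_2$.

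\emph{Step 1 (reduction).} For each $q\in[r]$ we have $\lambda_q^{(i)}\ge\lambda_r^{(i)}>0$, so the maximum in \eqref{eq:lambd(i,j)} is positive. Hence $\rho_q^{(i,j)}\in[0,1]$ is well defined. The geometric mean in \eqref{eq:s(i,j)} then gives $s_{ij}\in[0,1]$, and likewise $s_{ji}\in[0,1]$. Therefore $\mathbf{S}[i,j]=(s_{ij}+s_{ji})/2=1$ forces $s_{ij}=1$.

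Since a product of numbers in $[0,1]$ equals $1$ only if every factor is $1$, we get $\rho_q^{(i,j)}=1$ for all $q\in[r]$. This means $\min=\max$, i.e.
\[
\|\mathbf{\Sigma}^{(i)}\mathbf{v}_q^{(j)}\|_2=\lambda_q^{(i)}\quad\text{for all } q\in[r].
\]
Only the direction $s_{ij}=1$ is needed; $s_{ji}=1$ would give the symmetric statement.

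\emph{Step 2 (induction on $q$).} Expand the unit vector $\mathbf{v}_q^{(j)}=\sum_{k=1}^{n_x}c_k\mathbf{v}_k^{(i)}$ with $\sum_k c_k^2=1$. Then
\[
\|\mathbf{\Sigma}^{(i)}\mathbf{v}_q^{(j)}\|_2^2=\sum_{k}\big(\lambda_k^{(i)}\big)^2c_k^2 .
\]
Since the eigenvalues are nonnegative and strictly decreasing, their squares are also strictly decreasing.

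For $q=1$, the sum is at most $(\lambda_1^{(i)})^2$, with equality iff $c_k=0$ for all $k\ge2$. Hence $\mathbf{v}_1^{(j)}=\pm\mathbf{v}_1^{(i)}$.

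For the inductive step, assume $\mathbf{v}_p^{(j)}=\pm\mathbf{v}_p^{(i)}$ for all $p<q$. Orthonormality of $\mathbf{V}^{(j)}$ gives $\mathbf{v}_q^{(j)}\perp\mathbf{v}_p^{(j)}$, and hence $\mathbf{v}_q^{(j)}\perp\mathbf{v}_p^{(i)}$, for all $p<q$. So $c_k=0$ for $k<q$, and
\[
\|\mathbf{\Sigma}^{(i)}\mathbf{v}_q^{(j)}\|_2^2=\sum_{k\ge q}\big(\lambda_k^{(i)}\big)^2c_k^2\le\big(\lambda_q^{(i)}\big)^2,
\]
with equality iff $c_k=0$ for all $k>q$. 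Combined with Step 1, this forces $\mathbf{v}_q^{(j)}=\pm\mathbf{v}_q^{(i)}$. Running the induction up to $q=r$ completes the proof.

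\emph{Main obstacle.} The delicate point is that $\|\mathbf{\Sigma}^{(i)}\mathbf{v}\|_2=\lambda_q^{(i)}$ alone does not force $\mathbf{v}$ to be the $q$th eigenvector for $q>1$. A unit vector mixing the directions $\mathbf{v}_1^{(i)}$ and $\mathbf{v}_n^{(i)}$ could attain the same norm. The induction resolves this by using the orthogonality inherited from the previously identified directions to remove the components on $\mathbf{v}_1^{(i)},\dots,\mathbf{v}_{q-1}^{(i)}$.

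Two hypotheses of the lemma are used here. Distinctness of the eigenvalues, together with positive semidefiniteness, gives the strict inequality $(\lambda_q^{(i)})^2>(\lambda_k^{(i)})^2$ for $k>q$; this is what makes the equality case force $c_k=0$. The condition $\lambda_r^{(i)}>0$ keeps every $\rho_q^{(i,j)}$ well defined.
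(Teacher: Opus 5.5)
Your proof is correct and follows the same route as the paper: you reduce $\mathbf{S}[i,j]=1$ to $\|\mathbf{\Sigma}^{(i)}\mathbf{v}_q^{(j)}\|_2=\lambda_q^{(i)}$ for all $q\in[r]$, then induct on $q$, using orthogonality to the previously identified directions. The only difference is that you verify the equality case by expanding in the eigenbasis of $\mathbf{\Sigma}^{(i)}$, whereas the paper invokes the variational characterization and the uniqueness (up to sign) of its maximizer for a simple eigenvalue; your expansion is simply an explicit proof of that step.
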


\begin{proof}
    Assume $\mathbf{S}[i,j] = 1$. This implies that $s_{ij}=s_{ji}=1$. Since each factor $\rho_q^{(i,\,j)} \in [0,1]$, we have $\rho_{q}^{(i,\,j)}= 1 $ for all $q \in [r]$, which implies $ a_{q}^{(i,j)} = \lambda_q^{(i)} = \left\| \mathbf{\Sigma}^{(i)} \, \mathbf{v}_{q}^{(j)}\right\|_2$. 
    
    We proceed by induction on $q$. For $q=1$, since $\left\|\mathbf{v}_1^{(j)}\right\|_2 =1$, we have $\left\|\mathbf{\Sigma}^{(i)} \mathbf{v}_1^{(j)}\right\|_2 \leq \left\|\mathbf{\Sigma}^{(i)}\right\|_2 = \lambda_1^{(i)}$, with equality holding by assumption. Since $\lambda_1^{(i)}$ is simple, its
associated eigenspace is one-dimensional. Hence, equality holds if and only if
$\mathbf{v}_1^{(j)} = \pm \mathbf{v}_1^{(i)}$. This proves the base case.  

Now assume $\mathbf{v}_{\ell}^{(j)} = \pm \mathbf{v}_{\ell}^{(i)}$ for all $\ell < q$. Then $\mathbf{v}_q^{(j)}$ is orthogonal to $\mathrm{span}(\mathbf{v}_1^{(j)}, \dots,\mathbf{v}_{q-1}^{(j)})$, which equals $\mathrm{span}(\mathbf{v}_1^{(i)}, \dots,\mathbf{v}_{q-1}^{(i)})$. By the variational characterization of eigenvalues,
\begin{equation}
    \lambda_q^{(i)} = \max_{\substack{\|\mathbf{x}\|=1 \\  \mathbf{x} \perp \mathrm{span}(\mathbf{v}_1^{(i)}, \dots,\mathbf{v}_{q-1}^{(i)})}}\|\mathbf{\Sigma}^{(i)} \mathbf{x}\|_2.
\end{equation}
Since $\lambda_q^{(i)}  = \|\mathbf{\Sigma}^{(i)} \mathbf{v}_q^{(j)}\|_2$, $\mathbf{v}_q^{(j)}$ attains the maximum. Since $\lambda_q^{(i)}$ is simple, the maximizer is unique up to a sign change, and therefore $\mathbf{v}_q^{(j)} = \pm \mathbf{v}_q^{(i)}$. This concludes the proof.
\end{proof}

\begin{remark}
Under the same setting, the reverse implication also holds: if $\mathbf{v}_q^{(j)} = \pm \mathbf{v}_q^{(i)}$ for all $q \in [r]$, then $s_{ij} = 1$. This follows directly from the definition of $a_q^{(i,j)}$ and substitution into \eqref{eq:lambd(i,j)} and \eqref{eq:s(i,j)}. Similarly, $s_{ji} = 1$, and therefore $\mathbf{S}[i,j] = 1$.
\end{remark}

This interpretation of the similarity score motivates the following theoretical analysis of finite-sample effects on eigenspace alignment, i.e., the agreement between the leading covariance eigenspaces of different systems. Stronger alignment indicates that the systems distribute state variation along similar dominant directions. Specifically, we address two questions. First, we study how accurately the population covariance matrices can be estimated from finitely many trajectories, as characterized by Lemma~\ref{lemma:cov_est}, and how the sampling time $T$ influences this estimation through the system dynamics, as shown in Lemma~\ref{lemma:dynamics_dependency}. 
Second, we analyze how covariance estimation error propagates into deviations between eigenspaces using the Davis--Kahan theorem~\cite{stewart1990matrix}. These steps provide the foundation for bounding clustering errors induced by finite-sample covariance estimation.

%%%%%%%%%%%%%%%%%%
\subsection{Covariance-Estimation Bound}\label{sub:cov_est_bnd}
We begin by establishing a finite-sample concentration bound for the covariance estimation error in the spectral norm, $\| \widehat{\mathbf{\Sigma}}^{(i)} - \mathbf{\Sigma}^{(i)}\|_2$. The following bound is adapted from~\cite{Vershynin_2026}.

\begin{lemma}[Covariance Estimation Error]\label{lemma:cov_est}
    Consider the $i$-th system, and suppose that $\mathbf{x}_{1,T}^{(i)}, \dots, \mathbf{x}_{N_i,T}^{(i)}$ are i.i.d. random vectors in $\mathbb{R}^{n_x}$ distributed as $\mathcal{N}(\mathbf{0}, \mathbf{\Sigma}^{(i)})$. Then, for all $\xi > 0$,
\begin{align}\label{eq:cov_est}
&\mathbb{P}\Bigl( \|\widehat{\mathbf{\Sigma}}^{(i)} - \mathbf{\Sigma}^{(i)}\|_2 \ge \xi \Bigr) \nonumber \\
& \qquad \le 2 \,  \exp \left(c' n_x - C N_i \min \left\{ \frac{\xi^2}{\|\mathbf{\Sigma}^{(i)}\|_2^2}, \frac{\xi}{\|\mathbf{\Sigma}^{(i)}\|_2}\right\} \right), 
\end{align}
where $c'=\log 9$ and $C>0$ is a universal constant.
\end{lemma}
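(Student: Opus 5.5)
The plan is the standard $\varepsilon$-net plus Bernstein argument for sub-exponential quadratic forms, after whitening the samples. Since $\mathbf{x}_{l,T}^{(i)}\sim\mathcal{N}(\mathbf{0},\mathbf{\Sigma}^{(i)})$, I will write $\mathbf{x}_{l,T}^{(i)}=(\mathbf{\Sigma}^{(i)})^{1/2}\mathbf{g}_l$ with $\mathbf{g}_l\sim\mathcal{N}(\mathbf{0},\mathbf{I}_{n_x})$ i.i.d. This gives
\[
\widehat{\mathbf{\Sigma}}^{(i)}-\mathbf{\Sigma}^{(i)}=\frac{1}{N_i}\sum_{l=1}^{N_i}\Bigl(\mathbf{x}_{l,T}^{(i)}\mathbf{x}_{l,T}^{(i)\top}-\mathbf{\Sigma}^{(i)}\Bigr),
\]
which is symmetric. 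Its spectral norm therefore equals $\sup_{\|\mathbf{u}\|_2=1}|\mathbf{u}^\top(\widehat{\mathbf{\Sigma}}^{(i)}-\mathbf{\Sigma}^{(i)})\mathbf{u}|$.

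\textbf{Step 1 (discretization).} Take a $1/4$-net $\mathcal{N}_{1/4}$ of the unit sphere $S^{n_x-1}$. By the volumetric argument, $|\mathcal{N}_{1/4}|\le 9^{n_x}=e^{c' n_x}$ with $c'=\log 9$. For a symmetric matrix $\mathbf{E}$ the usual net argument gives
\[
\|\mathbf{E}\|_2\le \frac{1}{1-2\cdot\frac14}\max_{\mathbf{u}\in\mathcal{N}_{1/4}}|\mathbf{u}^\top\mathbf{E}\mathbf{u}|=2\max_{\mathbf{u}\in\mathcal{N}_{1/4}}|\mathbf{u}^\top\mathbf{E}\mathbf{u}|.
\]
It therefore suffices to bound $\mathbb{P}(|\mathbf{u}^\top\mathbf{E}\mathbf{u}|\ge\xi/2)$ for a fixed unit vector $\mathbf{u}$ and then take a union bound over the net.

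\textbf{Step 2 (fixed direction).} For fixed $\mathbf{u}$, I use
\[
\mathbf{u}^\top\mathbf{E}\mathbf{u}=\frac1{N_i}\sum_l\bigl(\langle\mathbf{x}_{l,T}^{(i)},\mathbf{u}\rangle^2-\mathbf{u}^\top\mathbf{\Sigma}^{(i)}\mathbf{u}\bigr).
\]
Each $\langle\mathbf{x}_{l,T}^{(i)},\mathbf{u}\rangle$ is $\mathcal{N}(0,\sigma_u^2)$ with $\sigma_u^2=\mathbf{u}^\top\mathbf{\Sigma}^{(i)}\mathbf{u}\le\|\mathbf{\Sigma}^{(i)}\|_2$. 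The summands are thus i.i.d., centered, and sub-exponential with $\|\cdot\|_{\psi_1}\lesssim\|\mathbf{\Sigma}^{(i)}\|_2$: a centered square of a sub-Gaussian is sub-exponential, with $\psi_1$-norm bounded by a constant times the squared $\psi_2$-norm. Bernstein's inequality for sub-exponential sums then yields
\[
\mathbb{P}\bigl(|\mathbf{u}^\top\mathbf{E}\mathbf{u}|\ge \xi/2\bigr)\le 2\exp\Bigl(-C N_i\min\Bigl\{\tfrac{\xi^2}{\|\mathbf{\Sigma}^{(i)}\|_2^2},\tfrac{\xi}{\|\mathbf{\Sigma}^{(i)}\|_2}\Bigr\}\Bigr).
\]
The factor $1/2$ and the Bernstein constant are absorbed into the universal constant $C$.

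\textbf{Step 3 (union bound).} Summing the fixed-direction bound over the at most $e^{c'n_x}$ points of the net gives exactly \eqref{eq:cov_est}.

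The only delicate point is Step 2. The sub-exponential norm must be controlled uniformly in $\mathbf{u}$ by $\|\mathbf{\Sigma}^{(i)}\|_2$, rather than by a quantity like $\mathrm{tr}(\mathbf{\Sigma}^{(i)})$, so that the bound scales correctly. Gaussianity makes this immediate, because $\langle \mathbf{x},\mathbf{u}\rangle/\sigma_u$ is standard normal and $\sigma_u^2\le\|\mathbf{\Sigma}^{(i)}\|_2$. Alternatively, the whole statement can be cited directly from the covariance-estimation theorem in~\cite{Vershynin_2026}, after checking that its constants match $c'=\log 9$.
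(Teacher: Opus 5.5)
Your proof is correct and follows the same route as the paper. The paper notes that each projection $\mathbf{v}^\top\mathbf{x}_{j,T}^{(i)}$ is Gaussian with variance at most $\|\mathbf{\Sigma}^{(i)}\|_2$, invokes the standard covariance concentration result of \cite{Vershynin_2026}, and attributes the $9^{n_x}$ factor to a $1/4$-net argument. You write out the argument behind that citation (reduction to the net with factor $2$, sub-exponential Bernstein in a fixed direction, union bound over at most $9^{n_x}$ points), which shows exactly where the leading $2$ and $c'=\log 9$ come from.
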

\begin{proof}
    The result follows from standard non-asymptotic covariance concentration inequalities for sub-Gaussian random vectors~\cite{Vershynin_2026}. Since $\mathbf{x}_{j,T}^{(i)} \sim \mathcal{N}(\mathbf{0}, \mathbf{\Sigma}^{(i)})$, for every unit vector $\mathbf{v} \in \mathbb{R}^{n_x}$, the scalar projection $\mathbf{v}^{\top} \mathbf{x}_{j,T}^{(i)}$ is Gaussian with variance $\mathbf{v}^\top \mathbf{\Sigma}^{(i)} \mathbf{v} \leq \|\mathbf{\Sigma}^{(i)}\|_2$. Thus, the sub-Gaussian variance proxy is bounded by $\|\mathbf{\Sigma}^{(i)}\|_2$. Applying the standard covariance concentration bound with dimension $n_x$, sample size $N_i$, and deviation level $\xi$ yields \eqref{eq:cov_est}. The factor $9^{n_x}$ arises from a $1/4$-net argument; see, e.g.,~\cite{fine}. Since $9^{n_x}=\exp(n_x\log 9)$, the stated form follows.
\end{proof}

To characterize the effect of the system parameters and sampling horizon $T$ on covariance estimation, we first express the terminal state of system $i$ along trajectory $l$ as
\begin{equation}
    \mathbf{x}_{l,T}^{(i)} =  (\mathbf{A}^{(i)})^T \mathbf{x}_{l,0}^{(i)} + \mathbf{G}_T^{(i)} \tilde{\mathbf{u}}^{(i)}_{l} + \mathbf{F}_T^{(i)} \tilde{\mathbf{w}}^{(i)}_{l} 
    = ~\mathbf{H}_T^{(i)} ~\boldsymbol{\eta}_{l}^{(i)}.\label{eq:x_t}
\end{equation}
Here,
\begin{align}
\mathbf{G}_T^{(i)}
&=
\big[
(\mathbf{A}^{(i)})^{T-1}\mathbf{B}^{(i)}
\;\;
(\mathbf{A}^{(i)})^{T-2}\mathbf{B}^{(i)}
\;\;
\cdots
\;\;
\mathbf{B}^{(i)}
\big], \nonumber\\
\mathbf{F}_T^{(i)}
&=
\big[
(\mathbf{A}^{(i)})^{T-1}
\;\;
(\mathbf{A}^{(i)})^{T-2}
\;\;
\cdots
\;\;
\mathbf{I}
\big], \nonumber\\
\mathbf{H}_T^{(i)}
&=
\big[
(\mathbf{A}^{(i)})^T
\;\;
\mathbf{G}_T^{(i)}
\;\;
\mathbf{F}_T^{(i)}
\big].
\label{eq:G_F_H_defs}
\end{align}
Moreover, with $\operatorname{col}(\cdot)$ denoting vertical concatenation,
\[
\tilde{\mathbf{u}}_l^{(i)}
=
\operatorname{col}(\mathbf{u}_{l,0}^{(i)},\ldots,\mathbf{u}_{l,T-1}^{(i)}),
\quad
\tilde{\mathbf{w}}_l^{(i)}
=
\operatorname{col}(\mathbf{w}_{l,0}^{(i)},\ldots,\mathbf{w}_{l,T-1}^{(i)}),
\]
and $
\boldsymbol{\eta}_l^{(i)}
=
\operatorname{col}
\left(
\mathbf{x}_{l,0}^{(i)},
\tilde{\mathbf{u}}_l^{(i)},
\tilde{\mathbf{w}}_l^{(i)}
\right).
$

\begin{lemma}[System Dynamics Dependency]\label{lemma:dynamics_dependency}
Given the representation $\mathbf{x}_{l,T}^{(i)}=\mathbf{H}_T^{(i)}\boldsymbol{\eta}_{l}^{(i)}$, the population covariance
$\mathbf{\Sigma}^{(i)} := \mathbb{E}[
\mathbf{x}_{l,T}^{(i)}\mathbf{x}_{l,T}^{(i)\top}]$
satisfies $ \mathbf{\Sigma}^{(i)} = \mathbf{H}_T^{(i)}
\mathbf{\Sigma}_{\eta}^{(i)} \mathbf{H}_T^{(i)\top}$, where $ \mathbf{\Sigma}_{\eta}^{(i)} := \mathbb{E}[ \boldsymbol{\eta}_{l}^{(i)} \boldsymbol{\eta}_{l}^{(i)\top}]$.
Consequently,
\begin{equation}\label{eq:final_sigma_bound}
\|\mathbf{\Sigma}^{(i)}\|_2 
\le \bar{\sigma}_i^2 \left(
\|\mathbf{A}^{(i)}\|_2^{2T} 
+ \left(1 + \|\mathbf{B}^{(i)}\|_2^2\right) 
\sum_{k=0}^{T-1} \|\mathbf{A}^{(i)}\|_2^{2k}
\right),
\end{equation}
where $\bar{\sigma}_i^2 := \left\|\mathbf{\Sigma}_{\eta}^{(i)}\right\|_2=\max\left\{ \sigma_{x,i}^2,\; \sigma_{u,i}^2,\; \sigma_{w,i}^2 \right\}$.
\end{lemma}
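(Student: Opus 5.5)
The first claim follows from linearity of expectation. Since $\mathbf{H}_T^{(i)}$ is deterministic, we have
\[
\mathbf{\Sigma}^{(i)}=\mathbb{E}[\mathbf{H}_T^{(i)}\boldsymbol{\eta}_l^{(i)}\boldsymbol{\eta}_l^{(i)\top}\mathbf{H}_T^{(i)\top}]=\mathbf{H}_T^{(i)}\mathbf{\Sigma}_\eta^{(i)}\mathbf{H}_T^{(i)\top}.
\]

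\textbf{Structure of $\mathbf{\Sigma}_\eta^{(i)}$.} The initial state, the inputs and the noise are mutually independent, zero-mean Gaussians with isotropic covariances. Hence $\mathbf{\Sigma}_\eta^{(i)}$ is block diagonal:
\[
\mathbf{\Sigma}_\eta^{(i)}=\mathrm{blkdiag}\bigl(\sigma_{x,i}^2\mathbf{I}_{n_x},\,\sigma_{u,i}^2\mathbf{I}_{Tn_u},\,\sigma_{w,i}^2\mathbf{I}_{Tn_x}\bigr).
\]
Its spectral norm is the largest diagonal entry, namely $\max\{\sigma_{x,i}^2,\sigma_{u,i}^2,\sigma_{w,i}^2\}=\bar\sigma_i^2$. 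This establishes the stated identity for $\bar\sigma_i^2$.

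\textbf{Reducing to a Gram-type quantity.} Because $\mathbf{\Sigma}_\eta^{(i)}\preceq\bar\sigma_i^2\mathbf{I}$, congruence by $\mathbf{H}_T^{(i)}$ gives
\[
\mathbf{\Sigma}^{(i)}\preceq\bar\sigma_i^2\,\mathbf{H}_T^{(i)}\mathbf{H}_T^{(i)\top}.
\]
Both sides are positive semidefinite, so $\|\mathbf{\Sigma}^{(i)}\|_2\le\bar\sigma_i^2\|\mathbf{H}_T^{(i)}\mathbf{H}_T^{(i)\top}\|_2$. Writing out the block-column structure,
\[
\mathbf{H}_T^{(i)}\mathbf{H}_T^{(i)\top}=(\mathbf{A}^{(i)})^T(\mathbf{A}^{(i)})^{T\top}+\sum_{k=0}^{T-1}(\mathbf{A}^{(i)})^k\mathbf{B}^{(i)}\mathbf{B}^{(i)\top}(\mathbf{A}^{(i)})^{k\top}+\sum_{k=0}^{T-1}(\mathbf{A}^{(i)})^k(\mathbf{A}^{(i)})^{k\top}.
\]

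\textbf{Bounding term by term.} Apply the triangle inequality to this sum, then submultiplicativity with $\|(\mathbf{A}^{(i)})^k\|_2\le\|\mathbf{A}^{(i)}\|_2^k$. This yields the bounds $\|\mathbf{A}^{(i)}\|_2^{2T}$ for the first term, $\|\mathbf{B}^{(i)}\|_2^2\|\mathbf{A}^{(i)}\|_2^{2k}$ for the $k$-th input term, and $\|\mathbf{A}^{(i)}\|_2^{2k}$ for the $k$-th noise term. Collecting the last two sums gives the factor $(1+\|\mathbf{B}^{(i)}\|_2^2)\sum_{k=0}^{T-1}\|\mathbf{A}^{(i)}\|_2^{2k}$, which is \eqref{eq:final_sigma_bound}.

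The only point needing care is the reindexing of the columns of $\mathbf{G}_T^{(i)}$ and $\mathbf{F}_T^{(i)}$, whose powers run from $T-1$ down to $0$, so that the sums align. Beyond that, everything is routine: the step $\mathbf{\Sigma}_\eta\preceq\bar\sigma^2\mathbf{I}\Rightarrow\mathbf{H}\mathbf{\Sigma}_\eta\mathbf{H}^\top\preceq\bar\sigma^2\mathbf{H}\mathbf{H}^\top$ is just congruence.
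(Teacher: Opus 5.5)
Your proof is correct and follows essentially the same route as the paper. The paper bounds $\|\mathbf{\Sigma}^{(i)}\|_2 \le \|\mathbf{H}_T^{(i)}\|_2^2\|\mathbf{\Sigma}_\eta^{(i)}\|_2$ by submultiplicativity and cites a block-matrix norm inequality to split $\|\mathbf{H}_T^{(i)}\|_2^2$ into the $(\mathbf{A}^{(i)})^T$, $\mathbf{G}_T^{(i)}$, $\mathbf{F}_T^{(i)}$ contributions, while your Loewner congruence step and direct expansion of $\mathbf{H}_T^{(i)}\mathbf{H}_T^{(i)\top}$ give the identical bound and prove that block inequality inline (with the block-diagonal structure of $\mathbf{\Sigma}_\eta^{(i)}$ justified more explicitly via mutual independence).
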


\begin{proof} 
From the covariance decomposition in the lemma, we have 
\begin{align}\label{eq:s_bound}
\left\|\mathbf{\Sigma}^{(i)}\right\|_2 =  \left\|\mathbf{H}_T^{(i)}
\mathbf{\Sigma}_{\eta}^{(i)} \mathbf{H}_T^{(i)\top}\right\|_2 \leq \left\|\mathbf{H}_T^{(i)} \right\|_2^2 \left\|
\mathbf{\Sigma}_{\eta}^{(i)}\right\|_2.
\end{align}
    By the block-matrix norm bound in~\cite[Proposition~3]{shamrai2025perturbation}, we have 
    \begin{align}  \label{eq:h_bound}
        \left\|\mathbf{H}_T^{(i)} \right\|_2^2 \leq \left\|\mathbf{A}^{(i)} \right\|_2^{2T} + \left\|\mathbf{G}_T^{(i)} \right\|_2^{2} + \left\|\mathbf{F}_T^{(i)} \right\|_2^{2}.
    \end{align}
    Moreover, one can write
        \begin{align} \label{eq:g_bound}
        &\left\|\mathbf{G}_T^{(i)} \right\|_2^2 \leq \left\|\mathbf{B}^{(i)}\right\|_2^{2}\, \sum_{k=0}^{T-1} \left\|\mathbf{A}^{(i)}\right\|_2^{2k}, \quad
        \left\|\mathbf{F}_T^{(i)} \right\|_2^2 \leq  \sum_{k=0}^{T-1} \left\|\mathbf{A}^{(i)}\right\|_2^{2k}.
    \end{align}
    Since $\mathbf{\Sigma}_{\eta}^{(i)}$ is a block diagonal matrix, $\left\|\mathbf{\Sigma}_{\eta}^{(i)}\right\|_2 = \max\left\{ \sigma_{x,i}^2,\, \sigma_{u,i}^2,\, \sigma_{w,i}^2 \right\}$. Substituting \eqref{eq:g_bound} into \eqref{eq:h_bound} and combining the resulting bound with \eqref{eq:s_bound} yields \eqref{eq:final_sigma_bound}.
\end{proof}

\begin{remark}[Fundamental trade-offs in sample complexity]
Lemma~\ref{lemma:dynamics_dependency} shows how the sample complexity depends on the stability of the system dynamics. For stable systems with $\|\mathbf{A}^{(i)}\|_2<1$, the covariance norm remains uniformly bounded as $T\to\infty$, namely
$\|\mathbf{\Sigma}^{(i)}\|_2 \leq \bar{\sigma}_i^2(1+\|\mathbf{B}^{(i)}\|_2^2)/(1-\|\mathbf{A}^{(i)}\|_2^2)$, so a fixed target accuracy $\xi$ requires $N_i=\mathcal{O}(n_x)$ trajectories, independently of $T$. For marginally stable systems with $\|\mathbf{A}^{(i)}\|_2=1$, the covariance norm grows linearly with $T$, yielding $N_i=\mathcal{O}(n_xT^2)$ to maintain fixed accuracy. For unstable systems with $\|\mathbf{A}^{(i)}\|_2>1$, the covariance norm grows exponentially with $T$, leading to the requirement $N_i=\mathcal{O}(n_x\|\mathbf{A}^{(i)}\|_2^{4T})$.
\end{remark}

The above bounds quantify the accuracy of covariance estimation. We next show how this estimation error translates into eigenspace misalignment through the Davis--Kahan theorem.

\subsection{Eigenspace Perturbation Bound}
%%%%
For system $i$, we decompose the eigenspaces as
\[
     \mathbf{V}^{(i)}= [\mathbf{V}_0^{(i)} \; \mathbf{V}_1^{(i)}],
     \qquad
     \widehat{\mathbf{V}}^{(i)}
     = [\widehat{\mathbf{V}}_0^{(i)} \; \widehat{\mathbf{V}}_1^{(i)}],
\]
where $\mathbf{V}_0^{(i)}$ and $\widehat{\mathbf{V}}_0^{(i)}$ span the top-$r$ eigenspaces of
$\mathbf{\Sigma}^{(i)}$ and $\widehat{\mathbf{\Sigma}}^{(i)}$, respectively, and
$\mathbf{V}_1^{(i)}$ and $\widehat{\mathbf{V}}_1^{(i)}$ span their orthogonal complements.
Let 
\[
    \mathbf{P}^{(i)} = \mathbf{V}_0^{(i)}\mathbf{V}_0^{(i)\top}, \quad 
    \widehat{\mathbf{P}}^{(i)}
    = \widehat{\mathbf{V}}_0^{(i)}\widehat{\mathbf{V}}_0^{(i)\top}
\]
denote the corresponding orthogonal projection matrices.
\begin{corollary}[Eigenspace Perturbation for System $i$]
\label{cor:pert_bound}
Assume that the Davis--Kahan separation condition~\cite{stewart1990matrix} holds with gap
$\delta^{(i)}>0$ between the population top-$r$ eigenspace of
$\mathbf{\Sigma}^{(i)}$ and the estimated complementary eigenspace of
$\widehat{\mathbf{\Sigma}}^{(i)}$. Then

\begin{equation}
    \left\|\mathbf{P}^{(i)}-\widehat{\mathbf{P}}^{(i)}\right\|_F
    \le
    \frac{\sqrt{2r}}{\delta^{(i)}}
    \left\|\widehat{\mathbf{\Sigma}}^{(i)}-\mathbf{\Sigma}^{(i)}\right\|_2 .
\end{equation}
\end{corollary}

\begin{proof}
For two $r$-dimensional subspaces,
$
\|\mathbf{P}^{(i)}-\widehat{\mathbf{P}}^{(i)}\|_F
=
\sqrt{2}\|(\widehat{\mathbf{V}}_1^{(i)})^\top\mathbf{V}_0^{(i)}\|_F.
$
By the Davis--Kahan $\sin\Theta$ theorem~\cite{stewart1990matrix},
\[
\|(\widehat{\mathbf{V}}_1^{(i)})^\top\mathbf{V}_0^{(i)}\|_F
\le
\frac{
\|(\widehat{\mathbf{V}}_1^{(i)})^\top
(\widehat{\mathbf{\Sigma}}^{(i)}-\mathbf{\Sigma}^{(i)})
\mathbf{V}_0^{(i)}\|_F
}{\delta^{(i)}}.
\]
Moreover,
$
\|(\widehat{\mathbf{V}}_1^{(i)})^\top
(\widehat{\mathbf{\Sigma}}^{(i)}-\mathbf{\Sigma}^{(i)})
\mathbf{V}_0^{(i)}\|_F
\le
\sqrt{r}\|\widehat{\mathbf{\Sigma}}^{(i)}-\mathbf{\Sigma}^{(i)}\|_2
$,
since $\|(\widehat{\mathbf{V}}_1^{(i)})^\top\|_2=1$ and
$\|\mathbf{V}_0^{(i)}\|_F=\sqrt{r}$. The result follows.
\end{proof}

Combining Lemma~\ref{lemma:cov_est} with Corollary~\ref{cor:pert_bound} gives the following finite-sample bound on eigenspace misalignment.

\begin{theorem}[Finite-Sample Eigenspace Misalignment Bound]\label{thm:eigenspace_misalign}
Let $\mathcal{E}_i(\tau) =\left\{\left\|\mathbf{P}^{(i)} - \widehat{\mathbf{P}}^{(i)} \right\|_F  \ge \tau \right\}$ denote the eigenspace misalignment event for system $i$. Then, for all $\tau>0$, 
\begin{equation}\label{ineq:mis_align}
    \begin{aligned}
        \mathbb{P}\left(\mathcal{E}_i(\tau)\right) 
        &\le 2\exp\Bigg(c'n_x - C N_i \min \Bigg\{ \frac{\tau^2(\delta^{(i)})^2  }{2r\|\mathbf{\Sigma}^{(i)}\|_2^2}, 
        \\
        & \hspace{4.3cm}
        \frac{\tau \delta^{(i)}}{\sqrt{2r}\|\mathbf{\Sigma}^{(i)}\|_2}\Bigg\} \Bigg),
    \end{aligned}
\end{equation}
    where $c'=\log 9$ and $C>0$ is as in Lemma~\ref{lemma:cov_est}.
\end{theorem}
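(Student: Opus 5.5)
The plan is to turn Corollary~\ref{cor:pert_bound} into an event inclusion and then bound the probability with Lemma~\ref{lemma:cov_est}. Corollary~\ref{cor:pert_bound} gives the deterministic bound
\[
\left\|\mathbf{P}^{(i)}-\widehat{\mathbf{P}}^{(i)}\right\|_F \le \frac{\sqrt{2r}}{\delta^{(i)}}\left\|\widehat{\mathbf{\Sigma}}^{(i)}-\mathbf{\Sigma}^{(i)}\right\|_2 .
\]
So on the event $\mathcal{E}_i(\tau)$ we get $\tau \le \frac{\sqrt{2r}}{\delta^{(i)}}\|\widehat{\mathbf{\Sigma}}^{(i)}-\mathbf{\Sigma}^{(i)}\|_2$. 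This yields the inclusion
\[
\mathcal{E}_i(\tau)\subseteq\left\{\left\|\widehat{\mathbf{\Sigma}}^{(i)}-\mathbf{\Sigma}^{(i)}\right\|_2\ge \xi\right\}, \qquad \xi:=\frac{\tau\delta^{(i)}}{\sqrt{2r}},
\]
and monotonicity of probability gives $\mathbb{P}(\mathcal{E}_i(\tau))\le \mathbb{P}(\|\widehat{\mathbf{\Sigma}}^{(i)}-\mathbf{\Sigma}^{(i)}\|_2\ge\xi)$.

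Next I check that Lemma~\ref{lemma:cov_est} applies. The terminal states $\mathbf{x}_{l,T}^{(i)}$, $l\in[N_i]$, come from independent rollouts. By \eqref{eq:x_t} each one is a linear image $\mathbf{H}_T^{(i)}\boldsymbol{\eta}_l^{(i)}$ of a zero-mean Gaussian vector $\boldsymbol{\eta}_l^{(i)}$. Hence they are i.i.d. $\mathcal{N}(\mathbf{0},\mathbf{\Sigma}^{(i)})$, with $\mathbf{\Sigma}^{(i)}$ as in Lemma~\ref{lemma:dynamics_dependency}.

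Since $\xi>0$ whenever $\tau>0$ and $\delta^{(i)}>0$, I apply \eqref{eq:cov_est} at this $\xi$. Substituting $\xi^2/\|\mathbf{\Sigma}^{(i)}\|_2^2=\tau^2(\delta^{(i)})^2/(2r\|\mathbf{\Sigma}^{(i)}\|_2^2)$ and $\xi/\|\mathbf{\Sigma}^{(i)}\|_2=\tau\delta^{(i)}/(\sqrt{2r}\|\mathbf{\Sigma}^{(i)}\|_2)$ into the minimum reproduces \eqref{ineq:mis_align} exactly, with the same constants $c'$ and $C$.

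The main subtlety is that the corollary's separation condition refers to the estimated complementary eigenspace, so $\delta^{(i)}$ is a random quantity depending on $\widehat{\mathbf{\Sigma}}^{(i)}$. I would treat the theorem as conditioned on (or holding on) the event that the Davis--Kahan gap condition with gap $\delta^{(i)}$ holds, with $\delta^{(i)}$ a fixed deterministic constant. Alternatively, one can invoke the population-gap version of Davis--Kahan (Yu--Wang--Samworth), which would replace $\delta^{(i)}$ by $\lambda_r^{(i)}-\lambda_{r+1}^{(i)}$ at the cost of constants. I will state this standing assumption explicitly, since that keeps the claimed form of the bound.
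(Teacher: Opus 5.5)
Your proposal is correct and matches the paper's proof: the paper also derives $\mathcal{E}_i(\tau)\subseteq\{\|\widehat{\mathbf{\Sigma}}^{(i)}-\mathbf{\Sigma}^{(i)}\|_2\ge \tau\delta^{(i)}/\sqrt{2r}\}$ from Corollary~\ref{cor:pert_bound} and then applies Lemma~\ref{lemma:cov_est} with $\xi=\tau\delta^{(i)}/\sqrt{2r}$. The paper does not address the random-gap subtlety you raise; of your fixes, assuming the gap condition holds almost surely or bounding $\mathbb{P}(\mathcal{E}_i(\tau)\cap\{\text{gap condition holds}\})$ is sound, but literally conditioning on the gap event would break the i.i.d.\ Gaussian hypothesis that Lemma~\ref{lemma:cov_est} requires.
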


\begin{proof}
By Corollary~\ref{cor:pert_bound},
\[
    \mathcal{E}_i(\tau)
    \subseteq
    \left\{
    \left\|
    \widehat{\mathbf{\Sigma}}^{(i)}-\mathbf{\Sigma}^{(i)}
    \right\|_2
    \ge
    \frac{\tau\delta^{(i)}}{\sqrt{2r}}
    \right\}.
\]
Applying Lemma~\ref{lemma:cov_est} with
$\xi=\tau\delta^{(i)}/\sqrt{2r}$ yields \eqref{ineq:mis_align}.
\end{proof}
\subsection{Finite-Sample Clustering Error Analysis}
The finite-sample eigenspace misalignment bound developed above provides the basis for the subsequent clustering analysis. Specifically, it allows us to quantify how eigenspace deviations induced by covariance estimation affect pairwise similarity scores and, consequently, the probability of incorrect clustering.

We next derive a bound on the probability of a false merge between systems from different clusters. Let 
\begin{equation}
d_{\min} = \min_{k \neq k'} \min_{\substack{i \in \mathcal{C}_k \\ j \in \mathcal{C}_{k'}}} \left\| \mathbf{P}^{(i)} - \mathbf{P}^{(j)} \right\|_F
\end{equation}
denote the minimum inter-cluster separation between the true eigenspaces. For a clustering threshold $0 < \gamma < d_{\min}$, we define the pairwise false-merge event for any pair of systems $i$ and $j$ from distinct clusters, i.e., $i \in \mathcal{C}_k$ and $j \in \mathcal{C}_{k'}$ with $k \neq k'$, as

\begin{equation}\label{eq:false_merge_event}
    \Xi_{ij}(\gamma) = \left\{ \left\|\widehat{\mathbf{P}}^{(i)} - \widehat{\mathbf{P}}^{(j)}\right\|_F \leq \gamma \right\}.
\end{equation}

\begin{theorem}[Pairwise False-Merge Probability]
\label{thm:pairwise_false_merge}
Let $i \in \mathcal{C}_k$ and $j \in \mathcal{C}_{k'}$ with $k \neq k'$, and suppose
$0<\gamma<d_{\min}$. Then
\begin{equation}
    \mathbb{P}\left(\Xi_{ij}(\gamma)\right)
    \le
    2 \exp \left(c'n_x-\mathcal{K}_{i}(\gamma)\right)
    +
    2 \exp \left(c'n_x-\mathcal{K}_{j}(\gamma)\right),
\end{equation}
where, for $m\in\{i,j\}$,
\begin{equation}
    \mathcal{K}_{m}(\gamma)
    :=
    C N_m
    \min \left\{
    \frac{(d_{\min}-\gamma)^2(\delta^{(m)})^2}
    {8r\|\mathbf{\Sigma}^{(m)}\|_2^2},
    \frac{(d_{\min}-\gamma)\delta^{(m)}}
    {\sqrt{8r}\|\mathbf{\Sigma}^{(m)}\|_2}
    \right\}.
\end{equation}
\end{theorem}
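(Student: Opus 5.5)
The plan is to reduce the false-merge event to an eigenspace misalignment event for one of the two systems, and then invoke Theorem~\ref{thm:eigenspace_misalign}. The reduction uses the triangle inequality on projection matrices in Frobenius norm:
\[
\left\|\mathbf{P}^{(i)}-\mathbf{P}^{(j)}\right\|_F \le \left\|\mathbf{P}^{(i)}-\widehat{\mathbf{P}}^{(i)}\right\|_F + \left\|\widehat{\mathbf{P}}^{(i)}-\widehat{\mathbf{P}}^{(j)}\right\|_F + \left\|\widehat{\mathbf{P}}^{(j)}-\mathbf{P}^{(j)}\right\|_F .
\]
Since $i\in\mathcal{C}_k$ and $j\in\mathcal{C}_{k'}$ with $k\neq k'$, the definition of $d_{\min}$ gives a left-hand side of at least $d_{\min}$. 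On $\Xi_{ij}(\gamma)$ the middle term is at most $\gamma$. Hence on $\Xi_{ij}(\gamma)$ the two misalignment terms together are at least $d_{\min}-\gamma>0$.

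Next I split this sum. If both misalignment terms were strictly less than $(d_{\min}-\gamma)/2$, their sum would be strictly less than $d_{\min}-\gamma$, which is a contradiction. Therefore
\[
\Xi_{ij}(\gamma)\subseteq \mathcal{E}_i\!\left(\tfrac{d_{\min}-\gamma}{2}\right)\cup \mathcal{E}_j\!\left(\tfrac{d_{\min}-\gamma}{2}\right),
\]
where $\mathcal{E}_m(\tau)$ is the misalignment event of Theorem~\ref{thm:eigenspace_misalign}. A union bound then yields $\mathbb{P}(\Xi_{ij}(\gamma))\le \mathbb{P}(\mathcal{E}_i(\tau))+\mathbb{P}(\mathcal{E}_j(\tau))$ with $\tau=(d_{\min}-\gamma)/2$. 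No independence between systems $i$ and $j$ is needed.

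Finally, I apply Theorem~\ref{thm:eigenspace_misalign} to each term with $\tau=(d_{\min}-\gamma)/2$, which is valid because $\tau>0$. The quadratic term $\tau^2(\delta^{(m)})^2/(2r\|\mathbf{\Sigma}^{(m)}\|_2^2)$ becomes $(d_{\min}-\gamma)^2(\delta^{(m)})^2/(8r\|\mathbf{\Sigma}^{(m)}\|_2^2)$. The linear term $\tau\delta^{(m)}/(\sqrt{2r}\|\mathbf{\Sigma}^{(m)}\|_2)$ becomes $(d_{\min}-\gamma)\delta^{(m)}/(2\sqrt{2r}\|\mathbf{\Sigma}^{(m)}\|_2)$, and $2\sqrt{2r}=\sqrt{8r}$. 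Multiplying by $CN_m$ gives exactly $\mathcal{K}_m(\gamma)$.

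There is no real obstacle here. The only care needed is in the set inclusion: it uses the strict inequality $\gamma<d_{\min}$, and it must match the closed ($\ge$) form of $\mathcal{E}_m$. I also need the Davis--Kahan gap assumption of Corollary~\ref{cor:pert_bound} to hold for both $m=i$ and $m=j$, which is implicit in the statement.
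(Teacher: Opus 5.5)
Your proposal is correct and follows the paper's proof: the triangle inequality gives $\|\mathbf{P}^{(i)}-\widehat{\mathbf{P}}^{(i)}\|_F+\|\mathbf{P}^{(j)}-\widehat{\mathbf{P}}^{(j)}\|_F\ge d_{\min}-\gamma$ on $\Xi_{ij}(\gamma)$, then a union bound over $\mathcal{E}_i$ and $\mathcal{E}_j$ at $\tau=(d_{\min}-\gamma)/2$, then Theorem~\ref{thm:eigenspace_misalign}. You also spell out the set inclusion and the constant check $2\sqrt{2r}=\sqrt{8r}$, which the paper leaves implicit.
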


\begin{proof}
Since $i$ and $j$ belong to different clusters,
$\|\mathbf{P}^{(i)}-\mathbf{P}^{(j)}\|_F\ge d_{\min}$. Therefore, by the triangle inequality,
\[
    \|\widehat{\mathbf{P}}^{(i)}-\widehat{\mathbf{P}}^{(j)}\|_F
    \ge
    d_{\min}
    -
    \|\mathbf{P}^{(i)}-\widehat{\mathbf{P}}^{(i)}\|_F
    -
    \|\mathbf{P}^{(j)}-\widehat{\mathbf{P}}^{(j)}\|_F .
\]
Thus, on the false-merge event $\Xi_{ij}(\gamma)$,
\[
    \|\mathbf{P}^{(i)}-\widehat{\mathbf{P}}^{(i)}\|_F
    +
    \|\mathbf{P}^{(j)}-\widehat{\mathbf{P}}^{(j)}\|_F
    \ge
    d_{\min}-\gamma .
\]
Therefore, by the union bound,
\[
    \mathbb{P}\left(\Xi_{ij}(\gamma)\right)
    \le
    \mathbb{P}\left(\mathcal{E}_i\left(\frac{d_{\min}-\gamma}{2}\right)\right)
    +
    \mathbb{P}\left(\mathcal{E}_j\left(\frac{d_{\min}-\gamma}{2}\right)\right).
\]
Applying Theorem~\ref{thm:eigenspace_misalign} with
$\tau=(d_{\min}-\gamma)/2$ gives the stated bound.
\end{proof}
\begin{remark}[Interaction between separation and eigengap]
The dependence of $\mathcal{K}_{m}(\gamma)$ on the product $\delta^{(m)}(d_{\min} -\gamma)$ highlights the interaction between the inter-cluster separation and the spectral gap. 
A larger margin $d_{\min} - \gamma$ makes false merges less likely, while a larger eigengap $\delta^{(m)}$ makes the leading eigenspace more stable under covariance estimation error. Conversely, a small eigengap can limit the benefit of a large inter-cluster separation, since the estimated eigenspace becomes more sensitive to finite-sample perturbations.  
\end{remark}

\begin{remark}[Dominant system in the pairwise bound]
The pairwise false-merge probability bound is controlled by both $\mathcal{K}_i(\gamma)$ and $\mathcal{K}_j(\gamma)$. Hence, the bound is dominated by the system with fewer trajectories, a smaller eigengap, or a larger covariance norm. This reflects that reliable pairwise clustering requires accurate eigenspace estimation for both systems.    
\end{remark}

\begin{corollary}[Global clustering success]
    Assume that the true number of clusters $K$ is known and that the clustering algorithm returns exactly $K$ nonempty clusters. Define the global false-merge event 
    \begin{equation*}
          \Xi_{\text{FM}}(\gamma) := \bigcup_{1\leq k < k' \leq K} \bigcup_{i\in \mathcal{C}_k}  \bigcup_{j\in \mathcal{C}_{k'}} \Xi_{ij}(\gamma),
    \end{equation*}
    which is the event that at least one pair of systems from distinct true clusters is falsely merged. Then the probability of perfect cluster assignment, up to a permutation of cluster labels, satisfies
    \begin{align}
        \mathbb{P} \left(\Xi^c_{\text{FM}}(\gamma)\right) \geq 1 - 2 M^2 \exp{ (c'n_x - \mathcal{K}_{\min}(\gamma))},
    \end{align}
where $\mathcal{K}_{\min}(\gamma)) := \min_{m \in [M]} \mathcal{K}_{m}(\gamma)$.
\end{corollary}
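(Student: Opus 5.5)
The plan is a union bound over all inter-cluster pairs, followed by Theorem~\ref{thm:pairwise_false_merge} for each pair and a uniform worst-case exponent.

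\emph{Union bound.} By definition of $\Xi_{\text{FM}}(\gamma)$ and subadditivity,
\[
\mathbb{P}\left(\Xi_{\text{FM}}(\gamma)\right) \le \sum_{1\le k<k'\le K}\sum_{i\in\mathcal{C}_k}\sum_{j\in\mathcal{C}_{k'}} \mathbb{P}\left(\Xi_{ij}(\gamma)\right).
\]
The number of unordered inter-cluster pairs is at most $\binom{M}{2}\le M^2/2$.

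\emph{Pairwise bound.} Theorem~\ref{thm:pairwise_false_merge} bounds each summand by $2e^{c'n_x-\mathcal{K}_i(\gamma)}+2e^{c'n_x-\mathcal{K}_j(\gamma)}$. Since $\mathcal{K}_m(\gamma)\ge\mathcal{K}_{\min}(\gamma)$ for every $m\in[M]$, each summand is at most $4e^{c'n_x-\mathcal{K}_{\min}(\gamma)}$. Multiplying by $M^2/2$ gives $\mathbb{P}(\Xi_{\text{FM}}(\gamma))\le 2M^2 e^{c'n_x-\mathcal{K}_{\min}(\gamma)}$, and taking complements yields the stated lower bound.

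A sharper route is to expand the sum as $\sum_m (M-|\mathcal{C}_{k(m)}|)\,2e^{c'n_x-\mathcal{K}_m}$. This gives a factor $2M(M-1)$ or better, which is also dominated by $2M^2$. The constant matches the statement either way.

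\emph{Interpretation as clustering success.} The remaining step is to justify reading $\Xi^c_{\text{FM}}(\gamma)$ as perfect assignment up to permutation; I expect this to be the main obstacle. As stated, the corollary is really a bound on the probability of $\Xi^c_{\text{FM}}(\gamma)$, so the probabilistic content is complete after the previous step. The link to the algorithm's output requires interpreting the clustering step as one that, given exactly $K$ clusters, separates every pair with $\|\widehat{\mathbf{P}}^{(i)}-\widehat{\mathbf{P}}^{(j)}\|_F>\gamma$.

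Under that interpretation, no two systems from distinct true clusters share a label on $\Xi^c_{\text{FM}}(\gamma)$. Hence each of the $K$ returned clusters lies inside a single true cluster. Since there are exactly $K$ nonempty returned clusters and $K$ true clusters, a pigeonhole argument shows the map from returned clusters to true clusters is a bijection. The partitions therefore coincide up to relabeling.

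I would state this reasoning explicitly, flagging that within-cluster merging is implied by the count constraint rather than by a separate probabilistic event.
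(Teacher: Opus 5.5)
Your proposal is correct and matches the paper's proof: a union bound over inter-cluster pairs, Theorem~\ref{thm:pairwise_false_merge} applied to each pair with both exponents replaced by $\mathcal{K}_{\min}(\gamma)$, and $\sum_{k<k'}|\mathcal{C}_k||\mathcal{C}_{k'}|\le M^2/2$, which gives the factor $2M^2$. On the final step you are more careful than the paper. You point out that $\Xi^c_{\text{FM}}(\gamma)$ is a statement about $\|\widehat{\mathbf{P}}^{(i)}-\widehat{\mathbf{P}}^{(j)}\|_F$ rather than about the spectral-clustering output, so an explicit assumption linking the two is needed before your pigeonhole argument applies; the paper simply asserts that no false merges together with exactly $K$ nonempty clusters implies perfect assignment.
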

\begin{proof}
    By the union bound, 
    \begin{equation*}
                 \mathbb{P} \left( \Xi_{\text{FM}}(\gamma) \right) \le  \sum_{1\leq k < k' \leq K} \sum_{i\in \mathcal{C}_k}  \sum_{j\in \mathcal{C}_{k'}} \mathbb{P} \left(\Xi_{ij}(\gamma)\right).
    \end{equation*}
   Applying Theorem~\ref{thm:pairwise_false_merge}, and using $\mathcal{K}_{\min}(\gamma) = \min_{m \in [M]} \mathcal{K}_{m}(\gamma)$, we obtain 
    \begin{equation*}
          \mathbb{P} \left( \Xi_{\text{FM}}(\gamma) \right) \le 4 \sum_{1\leq k < k' \leq K} \left|\mathcal{C}_k\right|  \left|\mathcal{C}_{k'}\right|\exp \left(c'n_x - \mathcal{K}_{\min}(\gamma)\right).
    \end{equation*}
    Since $ \sum_{1\leq k < k' \leq K} \left|\mathcal{C}_k\right|  \left|\mathcal{C}_{k'}\right| \le \frac{M^2}{2}$, it follows that 
        \begin{equation*}
          \mathbb{P} \left( \Xi_{\text{FM}}(\gamma) \right) \le 2 M^2 \exp{ (c'n_x - \mathcal{K}_{\min}(\gamma))}.
    \end{equation*}
    Therefore,
    \begin{equation*}
        \mathbb{P} \left(\Xi_{\text{FM}}^c(\gamma)\right) \geq 1 - 2 M^2 \exp \left(c'n_x - \mathcal{K}_{\min}(\gamma)\right).
    \end{equation*}
   Since the algorithm returns exactly $K$ nonempty clusters, the absence of false merges implies perfect cluster assignment up to a permutation of cluster labels. This completes the proof.
\end{proof}

\begin{remark}[Sample complexity for global clustering success]
To attain  a global clustering success probability of at least $1-\epsilon$ success clustering probability, it is sufficient that the number of trajectories for each system $m \in [M]$ satisfies
\begin{align*}
    N_m \ge &\frac{1}{C}\left({c'} n_x + \log \left(\frac{2}{\epsilon}\right)+2\log  M \right) \times \\
    &\max \left\{ \frac{8r \left\|\mathbf{\Sigma}^{(m)}\right\|_2^2}{(d_{\min} - \gamma)^2 (\delta^{(m)})^2}, \frac{\sqrt{8r} \left\|\mathbf{\Sigma}^{(m)}\right\|_2}{(d_{\min} - \gamma)\delta^{(m)}}\right\}.
\end{align*}
    Thus, the required number of trajectories grows linearly with the state dimension, logarithmically with the number of systems, and increases as the inter-cluster separation margin $d_{\min} - \gamma$ or the eigengap $\delta^{(m)}$ becomes smaller.
\end{remark}

\section{Experiments}\label{sec:experiments}

%\subsection{Numerical Results}
\begin{figure*}[htbp]
%\hfill
\centering
%\vspace{1.5cm}
\subfigure[$k= 1$.]{\includegraphics[width=0.32\textwidth]{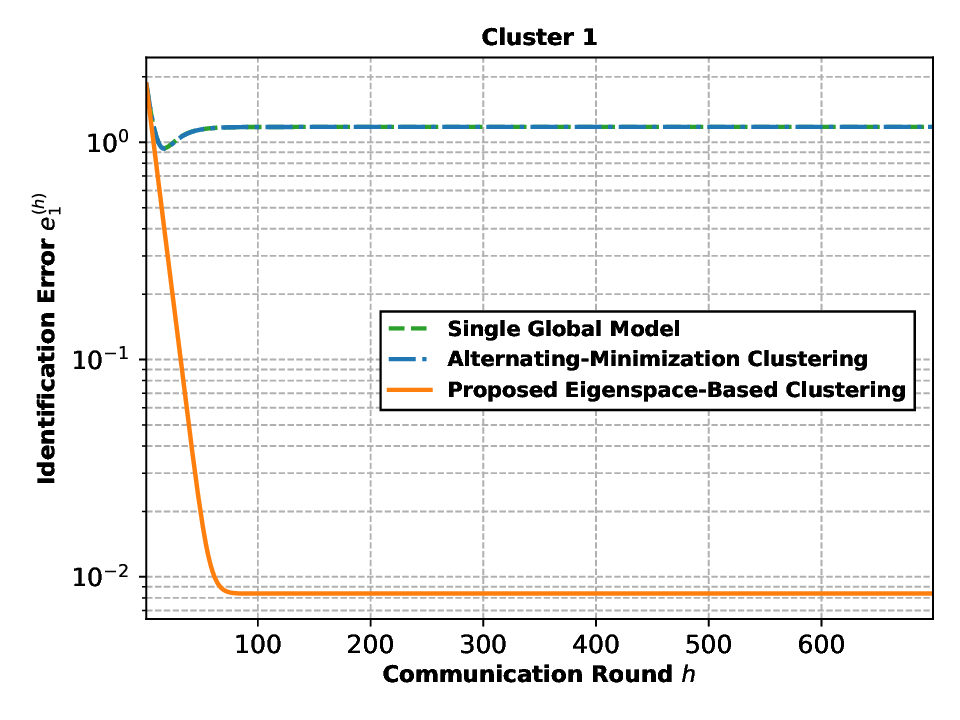}}
% \hfill
%\vspace{1.5cm}
\subfigure[ $k = 2$.]{\includegraphics[width=0.32\textwidth]{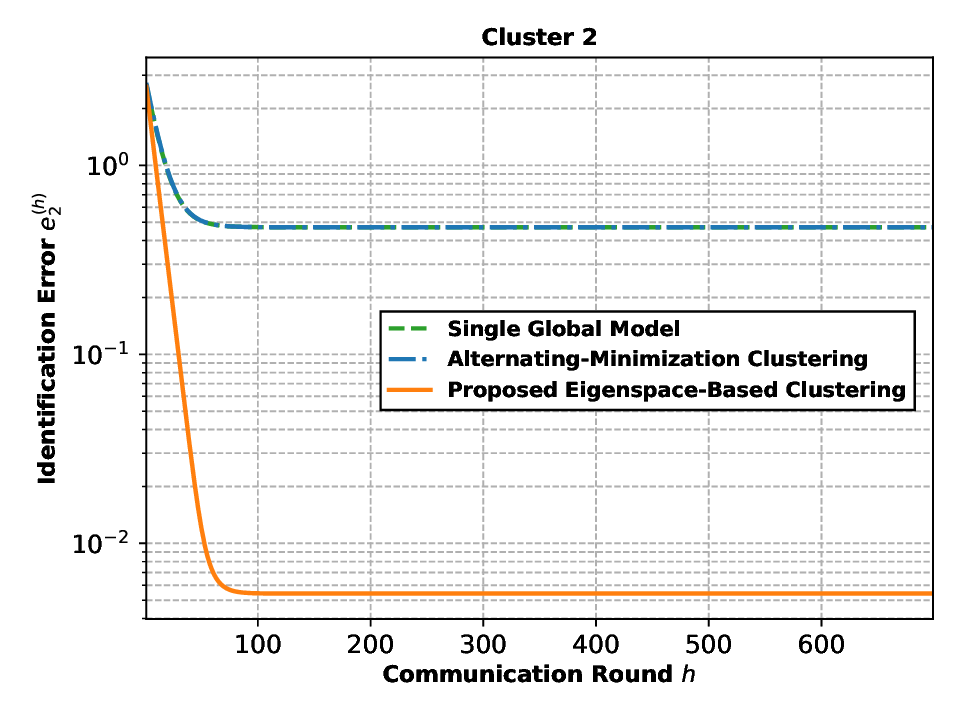}}
% \hfill
%\vspace{-3.5cm}
\subfigure[ $k = 3$.]{\includegraphics[width=0.32\textwidth]{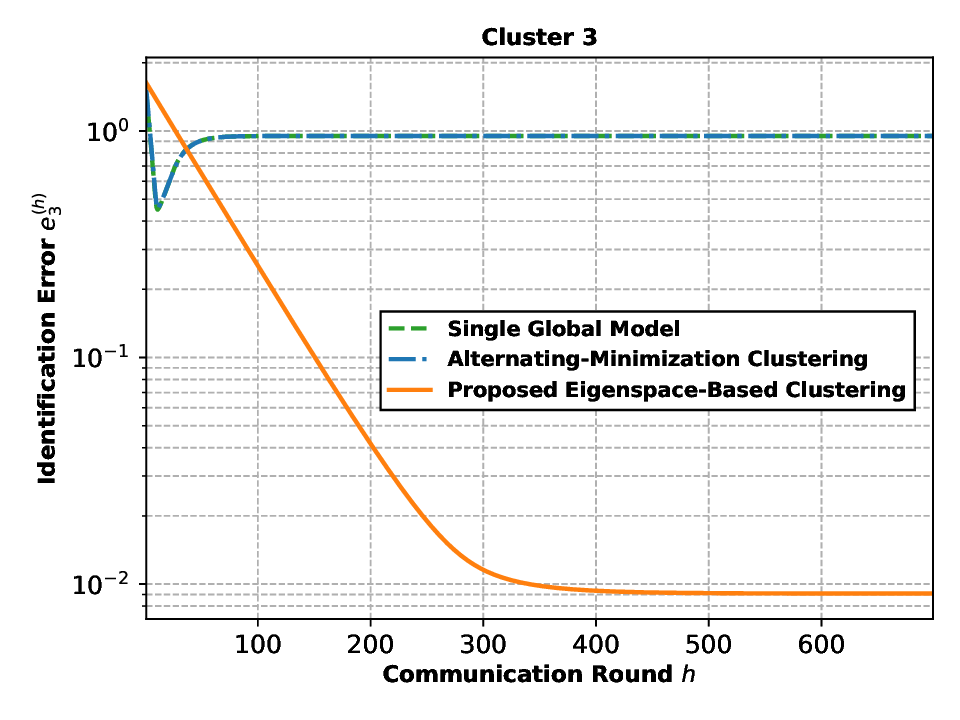}}
% \hfill

\caption{Cluster-wise identification error $e_k^{(h)}$ under random initialization for the three true clusters.}
\label{fig:cluster_wise_sysid_baselines}
\vspace{-.1in}
\end{figure*}

%%%%%%%%%%
\begin{figure*}[htbp]
%\hfill
\centering
%\vspace{1.5cm}
\subfigure[$k=1$.]{\includegraphics[width=0.32\textwidth]{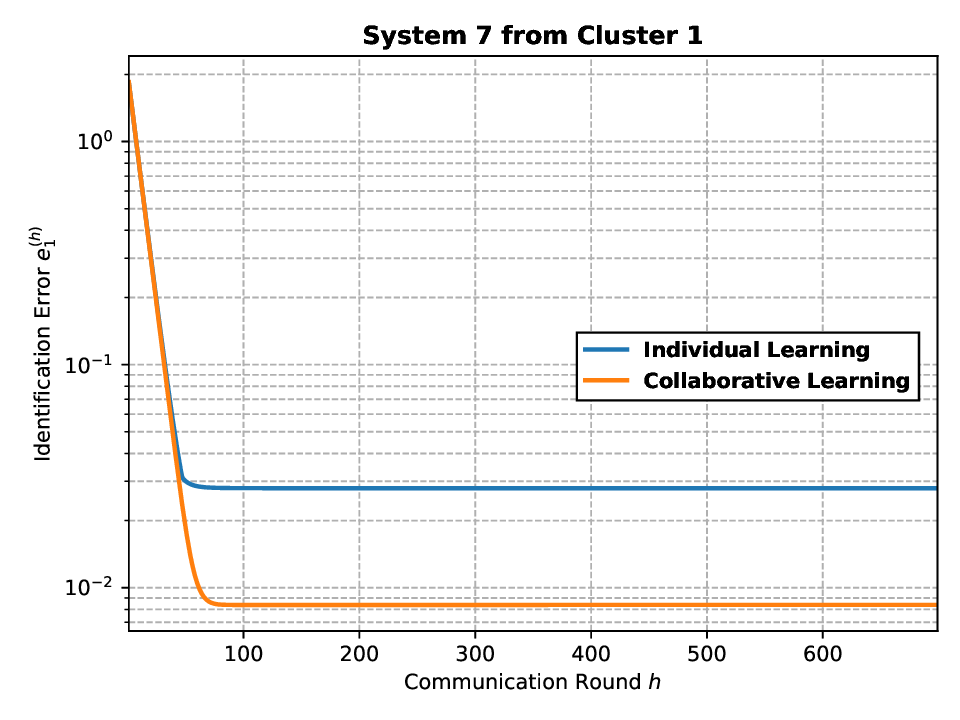}}
% \hfill
%\vspace{1.5cm}
\subfigure[$k=2$.]{\includegraphics[width=0.32\textwidth]{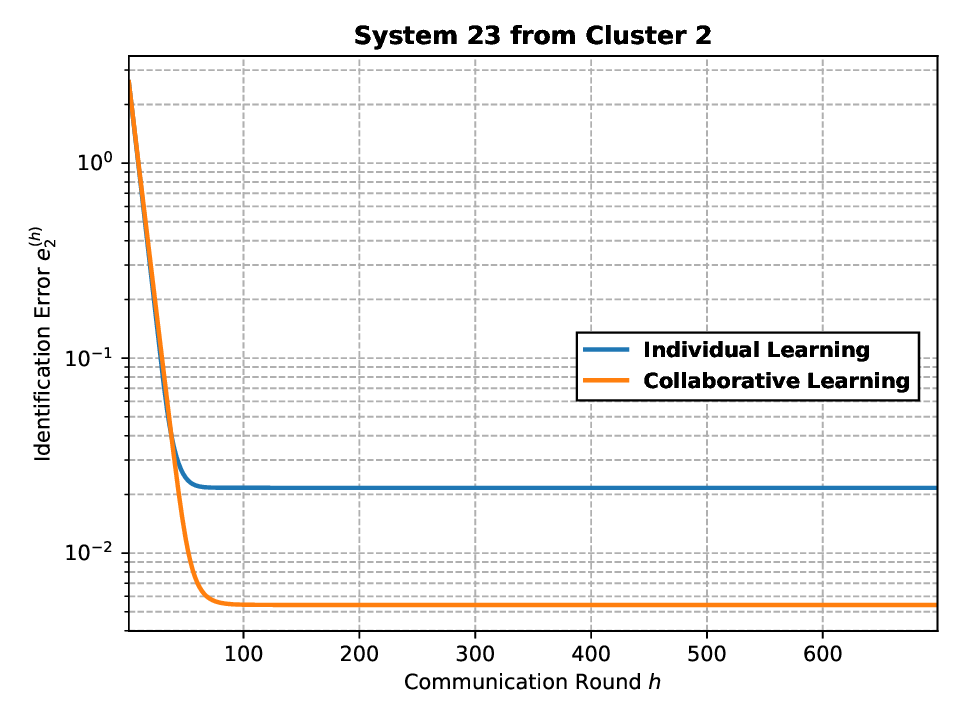}}
% \hfill
%\vspace{-3.5cm}
\subfigure[$k=3$.]{\includegraphics[width=0.32\textwidth]{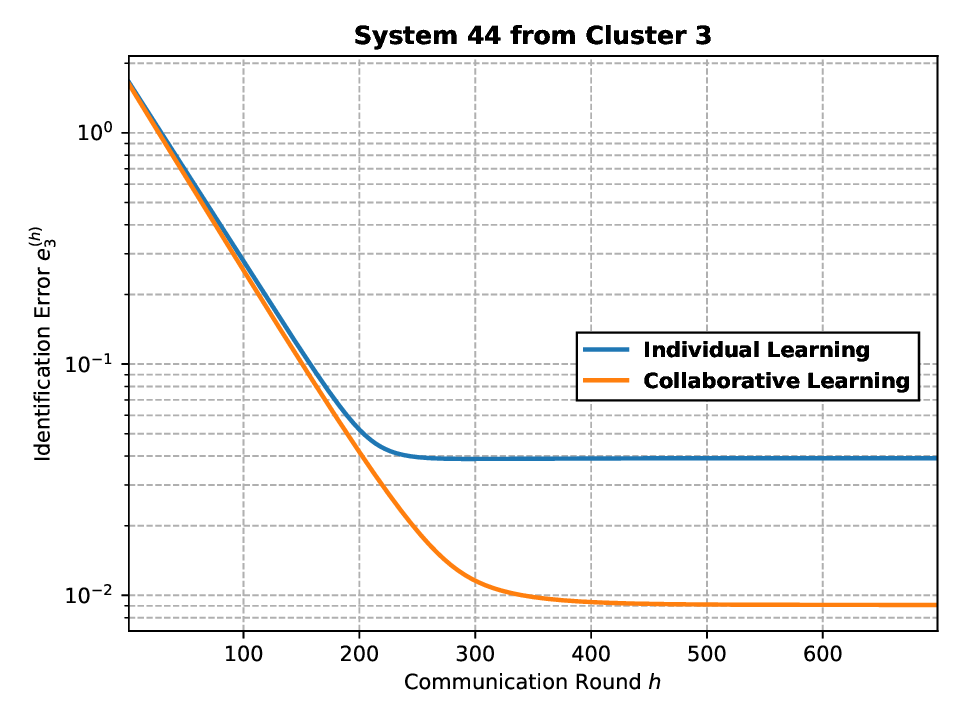}}
% \hfill
\caption{Benefit of collaborative learning.} 

\label{fig:fl_incentive}
\vspace{-.1in}
\end{figure*}

In this section, we evaluate the proposed eigenspace-based clustered system identification method on a set of heterogeneous LTI systems. The following simulations illustrate the performance of the proposed clustering method summarized in Algorithm~1.

\noindent \textbf{System dynamics}. The simulation setup is adapted from the numerical example in~\cite{toso2023learning}, which considers $M=50$ systems, each described by an LTI model as in~\eqref{eq:state-space}, with $K=3$ clusters. The number of systems in each cluster is given by $|\mathcal{C}_1|=10$, $|\mathcal{C}_2|=24$, and $|\mathcal{C}_3|=16$.
The cluster-specific system matrices $(\mathbf{A}_k,\mathbf{B}_k)$ are chosen as in~\cite{toso2023learning}.
The standard deviations of the initial state, control input, and process noise are set uniformly within each cluster as follows:
$\sigma_{x,i} = \sigma_{u,i} = \sigma_{w,i} = 0.11$ for all $i \in \mathcal{C}_1$,
$\sigma_{x,i} = \sigma_{u,i} = \sigma_{w,i} = 0.12$ for all $i \in \mathcal{C}_2$, and
$\sigma_{x,i} = \sigma_{u,i} = \sigma_{w,i} = 0.05$ for all $i \in \mathcal{C}_3$.

We consider the same number of trajectories for all systems, with $N_i = 150$ for all $i \in [M]$, and set the trajectory length to $T = 50$. A fixed step size $\eta_k = 10^{-3}$ is used for all clusters $k \in [K]$. For the proposed eigenspace-based clustering step, we use the top $r=2$ eigenvectors of each local sample covariance matrix.

\noindent \textbf{Performance evaluation}. At each iteration $h$, for each system $i$, we define the assigned-model error as $
    e_{i}^{(h)} = \left\|\widehat{\boldsymbol{\Theta}}_{\widehat{c}(i)}^{(h)} - \boldsymbol{\Theta}_{c(i)}\right\|_2, $
where $\widehat{c}(i)$ denotes the estimated cluster identity assigned to system $i$, and $c(i)$ denotes its true cluster identity. The cluster-wise error for true cluster $k$ is then defined as $
    e_{k}^{(h)} = \frac{1}{\left|\mathcal{C}_k\right|}\sum_{i\in \mathcal{C}_k } e_{i}^{(h)}$.

\noindent \textbf{Baselines}. 
We compare the proposed one-shot eigenspace-based clustering algorithm with two baselines. The first is the alternating-minimization method of~\cite{toso2023learning}. The second is a non-clustered approach in which system heterogeneity is ignored and a single global model is learned from all systems. For a fair comparison, all methods are initialized randomly.

Fig.~\ref{fig:cluster_wise_sysid_baselines} demonstrates the effectiveness of the proposed eigenspace-based clustering method compared to the baselines. The proposed method consistently achieves lower identification error across the three clusters. In contrast, the alternating-minimization baseline converges to a performance level nearly identical to that of the single global model. This overlap occurs because, under random initialization, the alternating-minimization procedure does not reliably recover the true cluster structure and tends to merge the systems into a single cluster. This observation is consistent with the initialization requirement in~\cite{toso2023learning}, where the initial cluster models are assumed to lie sufficiently close to the true system models, and highlights the sensitivity of training-dependent cluster assignment to initialization. Meanwhile, the single global model illustrates the importance of accounting for system heterogeneity through clustering.

Fig.~\ref{fig:fl_incentive} illustrates the learning gains obtained through collaboration. Individual learning relies only on the trajectories collected by a single system, whereas collaborative learning leverages information from multiple systems within the same estimated cluster. This additional information improves model estimation and leads to lower identification error.
%%%%%%%%%

\begin{figure}
    \centering
    \includegraphics[width=0.85\linewidth]{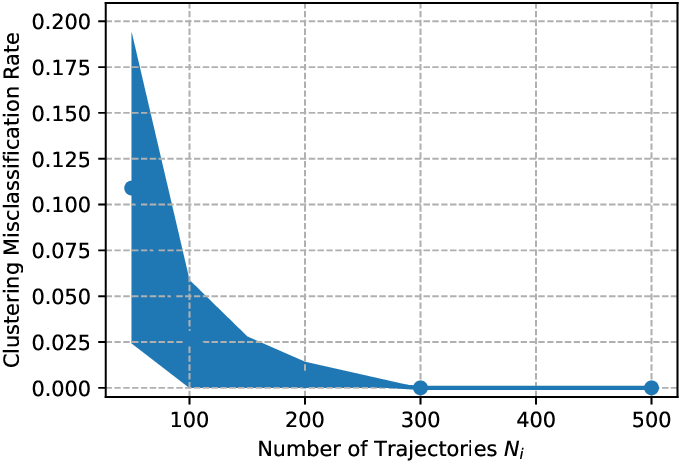}
    \caption{Effect of the number of trajectories on clustering performance. The curve reports the mean clustering misclassification rate over 20 independent trials, and the shaded region indicates one standard deviation.}  
\label{fig:misclass_rate} \vspace{-4mm}
\end{figure}

Fig.~\ref{fig:misclass_rate} shows that the clustering misclassification rate decreases as the number of trajectories $N_i$ increases. This trend is consistent with the finite-sample eigenspace perturbation analysis, since larger $N_i$ improves covariance estimation and leads to more reliable eigenspace-based clustering.

\section{Conclusion}\label{sec:conclusion}
We proposed a communication-efficient and training-free clustering algorithm for personalized federated system identification under heterogeneous system dynamics. The proposed method infers cluster identities by quantifying the alignment between the leading covariance eigenspaces associated with different systems. We provided a finite-sample theoretical analysis that characterizes eigenspace perturbations induced by covariance estimation error, bounds the probability of pairwise false merges, and establishes conditions for global clustering success in terms of system-dependent quantities. Numerical results demonstrated that the proposed one-shot eigenspace-based clustering approach accurately identifies systems with shared dynamics and improves personalized model-estimation performance compared with iterative training-based clustering and non-clustered baselines.

Future work includes extending the framework to partially observable systems, incorporating additional privacy-preserving mechanisms for shared eigenspace information to limit potential information leakage, and studying hybrid schemes in which the proposed one-shot clustering step is used to initialize subsequent iterative cluster-refinement methods. 

\ifCLASSOPTIONcaptionsoff
  \newpage
\fi

%===
\bibliographystyle{unsrt}
\bibliography{Ali}

\end{document}